\numberwithin{equation}{section}
\declaretheoremstyle[bodyfont=\it,qed=\qedsymbol]{noproofstyle}
\declaretheorem[numberlike=equation]{observation}
\declaretheorem[name=Observation,numbered=no]{observation*}
\declaretheorem[numberlike=equation]{fact}
\declaretheorem[numberlike=equation]{theorem}
\declaretheorem[name=Theorem,numbered=no]{theorem*}
\declaretheorem[numberlike=equation]{lemma}
\declaretheorem[name=Lemma,numbered=no]{lemma*}
\declaretheorem[numberlike=equation,style=noproofstyle,name=Lemma]{lemmawp}
\declaretheorem[name=Corollary,numbered=no]{corollary*}
\declaretheorem[numberlike=equation]{proposition}
\declaretheorem[name=Proposition,numbered=no]{proposition*}
\declaretheorem[numberlike=equation]{claim}
\declaretheorem[name=Claim,numbered=no]{claim*}
\declaretheorem[numberlike=equation]{conjecture}
\declaretheorem[name=Conjecture,numbered=no]{conjecture*}
\declaretheorem[numberlike=equation]{question}
\declaretheorem[name=Question,numbered=no]{question*}
\declaretheoremstyle[bodyfont=\it,qed=$\lozenge$]{defstyle} 
\declaretheorem[numberlike=equation,style=defstyle]{definition}
\declaretheorem[unnumbered,name=Definition,style=defstyle]{definition*}
\declaretheorem[unnumbered,name=Example,style=defstyle]{example*}
\declaretheorem[unnumbered,name=Notation=defstyle]{notation*}
\declaretheorem[unnumbered,name=Construction,style=defstyle]{construction*}
\declaretheorem[unnumbered,name=Remark,style=defstyle]{remark*}
\newcommand{\shortECCC}[2]{\texttt{\href{http://eccc.hpi-web.de/report/\ifnumcomp{#1}{>}{93}{19}{20}#1/#2/}{eccc:TR#1-#2}}}
\newcommand{\parseECCC}[1]{
\StrSubstitute{#1}{TR}{}[\tmpstring]%
\IfSubStr{\tmpstring}{/}{ 
\StrBefore{\tmpstring}{/}[\ecccyear]%
\StrBehind{\tmpstring}{/}[\ecccreport]%
}{
\StrBefore{\tmpstring}{-}[\ecccyear]%
\StrBehind{\tmpstring}{-}[\ecccreport]%
}%
\shortECCC{\ecccyear}{\ecccreport}}
\newif\ifnote
\newcommand{\RPnote}[1]{\textcolor{BrickRed}{\guillemotleft RP: #1\guillemotright}}
\newcommand{\MKnote}[1]{\textcolor{Purple}{\guillemotleft MK: #1\guillemotright}}
\newcommand{\ZGnote}[1]{\textcolor{OliveGreen}{\guillemotleft ZG: #1\guillemotright}}
\newcommand{\NSnote}[1]{\textcolor{NavyBlue}{\guillemotleft ZG: #1\guillemotright}}
\newcommand{\RPnote}[1]{}
\newcommand{\MKnote}[1]{}
\newcommand{\ZGnote}[1]{}
\newcommand{\NSnote}[1]{}
\newcommand{\ehref}[1]{\href{mailto:#1}{#1}}
\newcommand{\D}{\Delta}
\newcommand{\g}{{\mathsf{Gen}}}
\let\epsilon\varepsilon
\newcommand{\coeff}{\operatorname{coeff}}
\newcommand{\Del}{\Delta}
\title{Derandomization from Algebraic Hardness\footnote{A preliminary version of this paper appeared in the Proceedings of FOCS 2019~\cite{GKSS19conf}.}
}
\author{%
  Zeyu Guo\thanks{\ehref{zguotcs@gmail.com}. Department of Computer Science, University of Haifa, Israel. This work was done during a postdoctoral stay at IIT Kanpur.}
  \and
  Mrinal Kumar\thanks{\ehref{mrinalkumar08@gmail.com}. Department of Computer Science \& Engineering, IIT Bombay, India.  Part of this work was done while  at the Simons Institute for the Theory of Computing, Berkeley during the semester on Lower Bounds in Computational Complexity in Fall 2018 and during a postdoctoral stay  at the University of Toronto.}
\and
Ramprasad Saptharishi\thanks{\ehref{ramprasad@tifr.res.in}. Tata Institute of Fundamental Research, Mumbai, India. Research supported by Ramanujan Fellowship of DST. }
\and
{Noam Solomon\thanks{ \ehref{noam.solom@gmail.com}. Department of Mathematics, MIT, Cambridge, MA, USA.}}
}
\date{}
\begin{document}

\maketitle
{\let\thefootnote\relax
  \footnotetext{This research was supported in part by the International Centre for Theoretical Sciences (ICTS) during a visit for participating in the program - Workshop on Algebraic Complexity Theory (Code: ICTS/wact2019/03)}
\footnotetext{\textcolor{white}{Base version:~(\gitAuthorIsoDate)\;,\;\gitAbbrevHash\;\; \gitVtag}}
}

\begin{abstract}
  A hitting-set generator (HSG) is a polynomial map $\g:\F^k \to \F^n$ such that for all $n$-variate polynomials $C$ of small enough circuit size and degree, if $C$ is nonzero, then $C\circ \g$ is nonzero.
In this paper, we give a new construction of such an HSG assuming that we have an explicit polynomial of sufficient hardness.
Formally, we prove the following result over any field $\F$ of characteristic zero:
\begin{quote}
Let $k\in \N$ and $\delta > 0$ be arbitrary constants. Suppose $\set{P_d}_{d\in \N}$ is an explicit family of  $k$-variate polynomials such that $\deg P_d = d$ and $P_d$ requires algebraic circuits of size $d^\delta$. Then, there are explicit hitting sets of polynomial size for the class $\VP$.
\end{quote}
This is the first HSG in the algebraic setting that yields a complete derandomization of polynomial identity testing (PIT) for general circuits from a suitable algebraic hardness assumption. Unlike the prior constructions of such maps~\cite{NW94, KI04, AGS19, KST19}, our construction is purely algebraic and does \emph{not} rely on the notion of combinatorial designs.

\medskip

As a direct consequence, we show that even saving \emph{a single point} from the ``trivial'' explicit, exponential sized hitting sets for constant-variate polynomials of low \emph{individual-degree} which are computable by small circuits, implies a deterministic polynomial time algorithm for PIT. More precisely, we show the following:
\begin{quote}
  Let $k\in \N$ and $\delta > 0$ be arbitrary constants. Suppose for every $s$ large enough, there is an explicit hitting set of size at most $((s+1)^k - 1)$ for the class of $k$-variate polynomials of \emph{individual degree} $s$ that are computable by size $s^\delta$ circuits. Then there is an explicit hitting set of size $\poly(s)$ for the class of $s$-variate polynomials, of degree $s$, that are computable by size $s$ circuits. 
\end{quote}
As a consequence, we give a deterministic polynomial time construction of hitting sets for algebraic circuits, if a strengthening of the  $\tau$-Conjecture of Shub and Smale \cite{shub1995,smale98} is true.

\end{abstract}

\section{Introduction}

The interaction of hardness and randomness is one of the most well studied themes in computational complexity theory, and in this work we focus on exploring this interaction further in the realm of algebraic computation.
To set the stage, we start with a brief introduction to algebraic complexity.

The field of algebraic complexity primarily focuses on studying multivariate polynomials and their complexity in terms of the number of basic operations (additions and multiplications) required to compute them.
Algebraic circuits (which are just directed acyclic graphs with leaves labelled by variables or field constant, and internal gates labelled by $+$ or $\times$) form a very natural model of computation in this setting, and the size (number of gates or wires) of the smallest algebraic circuit computing a polynomial gives a robust measure of its complexity.

The main protagonists in the hardness-randomness interaction in algebraic complexity are the \emph{hardness} component, which is the question of proving superpolynomial lower bounds for algebraic circuits for any explicit polynomial family, and the \emph{randomness} component, which is the question of designing efficient \emph{deterministic} algorithms for polynomial identity testing (PIT) --- the algorithmic task of checking if a given circuit computes the zero polynomial.
Both these questions are of fundamental importance in computational complexity and are algebraic analogues of their more well known Boolean counterparts: the $\P \text{ vs } \NP$ question and the $\P \text{ vs } \BPP$ question respectively.
These seemingly different problems are closely related to each other, and in this work we focus on one direction of this relationship; namely, the use of explicit hard polynomial families for derandomization of PIT.

It is known from an influential work of Kabanets and Impagliazzo~\cite{KI04} that lower bounds on the algebraic circuit complexity of explicit polynomial families lead to non-trivial deterministic algorithms for polynomial identity testing (PIT) of algebraic circuits. Moreover, the results in~\cite{KI04} show that stronger lower bounds give faster deterministic algorithms for polynomial identity testing. For instance, from truly exponential (or $2^{\Omega(n)}$) lower bounds, we get quasipolynomial (or $n^{O(\log n)}$) time deterministic algorithms for PIT. From weaker superpolynomial (or $n^{\omega(1)}$) lower bounds, we only seem to get a subexponential (or $2^{n^{o(1)}}$) time PIT algorithm.

However, no matter how good the lower bounds for algebraic circuits are, this connection between lower bounds and derandomization does not seem to give truly polynomial time deterministic algorithms for PIT.
This is different from the Boolean setting, where it is known that strong enough boolean circuit lower bounds imply that $\BPP = \P$~\cite{IW97}.
The difference stems from the fact that, in the worst case, an $n$-variate degree $d$ polynomial $P$ needs to be queried on as many as $\binom{n + d}{d} \gg 2^n$ points to be sure of its nonzeroness.
A key player in this interaction of hardness and randomness, in the context of algebraic complexity, is the notion of a \emph{hitting-set generator} (HSG), which we now define.
\begin{definition}[Hitting-set generators] \label{defn:hsg}
A polynomial map $G:\F^k \to \F^m$ given by $G(z_1, z_2, \ldots, z_k) = \left(g_1(\vecz), g_2(\vecz), \ldots, g_n(\vecz)\right)$ is said to be a hitting-set generator (HSG) for a class ${\cal C} \subseteq \F[x_1, x_2, \ldots, x_n]$ of polynomials if for every nonzero $Q \in {\cal C}$, we have that $Q\circ G = Q(g_1, g_2, \ldots, g_n)$ is also nonzero. 

We shall say that $G$ is \emph{$t(n)$-explicit} if, for any $\veca \in \F^k$ of bit-complexity at most $n$, we can compute $G(\veca)$ in deterministic time $t(n)$. Here $k$ is called the \emph{seed length} of the HSG and $n$ is called the \emph{stretch} of the HSG.
The maximum of the degrees of $g_1, g_2, \ldots, g_n$ is called the \emph{degree} of the HSG.
\end{definition}

If a polynomial map $G$ is an HSG for a class ${\cal C}$ of circuits, we say  $G$ \emph{fools} the class ${\cal C}$.

Informally, an HSG $G$ gives a polynomial map which reduces the number of variables in the polynomials in ${\cal C}$ from $n$ to $k$ while preserving their nonzeroness.
It is not hard to see that such polynomial maps are helpful for deterministic PIT for ${\cal C}$.
To test if a given $n$-variate polynomial $Q \in {\cal C}$ is nonzero, it is sufficient to check that $Q \circ G$, a $k$-variate polynomial, is nonzero.
If the degree of each $g_i$ is not-too-large, then a ``brute-force'' check (via the Ore-DeMillo-Lipton-Schwartz-Zippel Lemma~\cite{O22,DL78,S80,Z79}) can be used to test if $Q\circ G$ is zero in at most $\poly(t(n))\cdot \left(\deg(G) \cdot \deg(Q)\right)^{O(k)}$ time, if $G$ is $t(n)$-explicit.
Thus, it is desirable to have HSGs that are very explicit (small $t(n)$), low degree and large stretch ($k \ll n$).

\paragraph{PIT from the Boolean and algebraic perspective: } 
Typically, in algebraic complexity, the model of algebraic circuits are allowed to introduce arbitrary field constants on the wires, \emph{for free}, to scale the value passed on. The size of an algebraic circuit usually just denotes the number of gates in the circuit (and not the complexity of intermediate constants used). In contrast, when PIT is
interpreted as a Boolean computational task, the circuit is expected to be provided as input which includes the descriptions of any constants used by the input circuit. Hence, the size of the input circuit also includes the complexity of contants on edges of the circuit.

Often, for PIT, it is convenient to construct hitting sets for polynomials that are computable by algebraic circuits of a certain size $s$, which is the number of gates in the circuit and \emph{does not} depend  on the constants used on the edges. As long as the  the hitting set produced is efficiently enumerable by a Turing machine, this is sufficient to solve the Boolean task of PIT as the size of the input is no more than the algebraic size of the circuit.

Throughout this paper, we always deal with this algebraic notion of size which refers to the number of gates in circuit (regardless of the constants used). Therefore, the notion of lower bounds will also refer to the number of gates necessary, and not related to the constants on the wires.  



\subsection{Prior construction of generators}

We shall use $\mathcal{C}(n,d,s)$ to denote the class of $n$-variate polynomials of degree at most $d$ that are computable by size $s$ circuits, and $\mathcal{C}(n,\text{i-deg: }d,s)$ to denote the class of $n$-variate polynomials of \emph{individual degree}\footnote{maximum degree of any variable; for example, a multilinear polynomial has individual degree $1$.} at most $d$ that are computable by size $s$ circuits. 

\paragraph{Generators from combinatorial designs:}
One of the earliest (and most well-known) applications of lower bounds to derandomization is the construction of \emph{pseudorandom generators (PRG)} from hard explicit Boolean functions by  Nisan and Wigderson~\cite{NW94}. In the algebraic setting, an analogous construction was shown to produce HSGs  by Kabanets and Impagliazzo~\cite{KI04}\footnote{Even though the construction of the generator is the same in~\cite{KI04} and~\cite{NW94}, there are crucial differences in the analysis.
  In particular, the analysis for the HSG in~\cite{KI04} relies on a deep result of Kaltofen~\cite{K89} about low degree algebraic circuits being closed under polynomial factorization.}.
These constructions are based on the notion of a combinatorial design, which is a family of subsets that have small pairwise intersection.
Given an explicit construction of such a combinatorial design (e.g.
a family ${\cal F} = \{S_1, S_2, \ldots, S_n\}$ of subsets of $[k]$ of size $t$ each), the PRG/HSG in~\cite{NW94, KI04} is then constructed by just taking a hard polynomial $P(x_1,\ldots, x_t)$ and defining the map as $G(y_1, y_2, \ldots, y_k) = \left(P(\vecy\mid_{S_1}), P(\vecy\mid_{S_2}), \ldots, P(\vecy\mid_{S_n}) \right)$.
The proof of correctness for this HSG goes via a hybrid argument and a result of Kaltofen~\cite{K89}.

\paragraph{Bootstrapping hitting sets and HSGs with large stretch:}
In a recent line of work~\cite{AGS19, KST19} the following surprising \emph{boostrapping} phenomenon was shown to be true for hitting sets for algebraic circuits. The following is the statement from \cite{KST19}:
\begin{theorem}[\cite{KST19}]\label{thm:bs-kst}
  Let $\delta > 0$ and $n \geq 2$ be constants.
Suppose that, for all large enough $s$, there is an explicit hitting set of size $s^{n-\delta}$ for all degree $s$, size $s$ algebraic formulas (or algebraic branching programs, or circuits respectively) over $n$ variables.
Then, there is an explicit hitting set of size $s^{\exp(\exp(O(\log^{*}s)))}$ for the class of degree $s$, size $s$ algebraic formulas (or algebraic branching programs, or circuits respectively) over $s$ variables.
\end{theorem}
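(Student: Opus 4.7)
The plan is to prove the theorem by iteratively applying a \emph{single-step lifting procedure} that turns an explicit hitting set handling $m$ variables into an explicit hitting set handling many more variables. At each iteration the number of variables grows rapidly (roughly exponentially in the previous count), so only $O(\log^{*} s)$ iterations will be needed to reach $s$ variables; the compounded blow-up in the degree and circuit-size parameters over these iterations is exactly what gives rise to the bound $s^{\exp(\exp(O(\log^{*} s)))}$ in the final hitting set.

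The heart of the argument is a lifting lemma of the following form: given an explicit hitting set for $\mathcal{C}(m, D, D)$ of size $T$, construct an explicit hitting set for $\mathcal{C}(m', D', D')$ with $m'$ much larger than $m$, where $T'$ and $D'$ are controlled in terms of $T$, $D$ and $m'$. The natural approach is to view the hitting set as (or extract from it) a polynomial map of low degree from a low-dimensional domain into $\F^m$, and then compose with an outer polynomial map from $\F^{m'}$ that is designed so that pulling back a nonzero element of $\mathcal{C}(m', D', D')$ produces a nonzero polynomial lying in the class the inner map already fools. The savings hypothesis $T \le s^{m-\delta}$ is crucial here: it gives the inner map a degree strictly below the trivial $s^{m}$, and this slack is exactly what allows the outer map to introduce additional variables while keeping the composed pullback within the reach of $H$.

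Iterating the lifting step, with the output of one invocation feeding in as the input of the next, requires careful parameter tracking. A slightly weaker version of the quantitative savings survives each iteration, and the degree and circuit-size parameters grow by at most a controlled exponential factor per step; a concrete accounting shows that after $O(\log^{*} s)$ iterations the handled variable count exceeds $s$, and the cumulative blow-up in size stays within $s^{\exp(\exp(O(\log^{*} s)))}$. The same proof goes through uniformly for formulas, algebraic branching programs, and general circuits because the substitutions performed by the lifting step are themselves computable in the relevant model with only polynomial blow-up in size.

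The main obstacle is the design of the lifting step so that it simultaneously (i) remains explicit at every iteration, (ii) preserves the quantitative savings $s^{-\delta}$ (rather than washing it out completely in a single step), and (iii) respects the target model of computation. Producing an outer map with the right degree/size tradeoff, and then showing that the induced degradation of the savings is slow enough that $O(\log^{*} s)$ iterations still leave enough slack to cover $s$ variables, is the technical crux of the proof.
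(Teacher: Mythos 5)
This theorem is not proved in the present paper: it is quoted verbatim from \cite{KST19}, and the authors only gesture at its proof in the surrounding prose, namely that the result ``is essentially constructed via a repeated composition of the HSG in \cite{KI04, NW94} where, for each step, it uses a different hard polynomial with gradually increasing hardness.'' Your outline is consistent with that narrative at the top level (iterate a lifting step, $O(\log^{*}s)$ rounds, track the compounding blow-up), so there is no contradiction with the paper.

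That said, your sketch leaves the one genuinely hard step essentially unspecified, and this is the place where your proposal would need to be fleshed out to become a proof rather than a restatement of the goal. In \cite{KST19} (and \cite{AGS19}) the lifting step is not ``view the hitting set as a polynomial map and compose with an outer map''; it is a two-stage conversion. First, from an explicit hitting set for few-variable circuits that saves even one point over the trivial bound, one extracts an \emph{explicit hard polynomial} in few variables via the Heintz--Schnorr/Agrawal argument (stated here as \autoref{thm:HS}). Second, that hard polynomial is fed into the Kabanets--Impagliazzo hitting-set generator \cite{KI04}, whose analysis rests on combinatorial designs and on Kaltofen's factoring result \cite{K89}; this produces a hitting set on many more variables, but with a degraded savings parameter. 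Your write-up never mentions the intermediate hard polynomial, never invokes the design-based generator, and never explains why the savings parameter $\delta$ degrades slowly enough to survive $O(\log^{*}s)$ rounds rather than being consumed in a single step. These are exactly the aspects you flag at the end as ``the technical crux,'' so as written the proposal names the obstacle without overcoming it. To align with the strategy the paper alludes to, you would need to replace your generic ``outer polynomial map'' with the Heintz--Schnorr extraction followed by the NW/KI generator, and carry out the quantitative bookkeeping for the degree, size, and savings across the recursion.
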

In other words, a slightly non-trivial explicit construction of hitting sets even for constant-variate algebraic circuits implies an almost complete derandomization of PIT for algebraic circuits.
A natural question in this direction which has remained  open is the following.
\begin{question}[\cite{KST19}]\label{q:2}
  Can slightly non-trivial hitting sets for constant-variate algebraic circuits can be bootstrapped to get polynomial size (and not just almost polynomial size as in~\autoref{thm:bs-kst}) hitting sets for all circuits ?
\end{question}
The proof of~\autoref{thm:bs-kst} can also be interpreted as a different HSG for algebraic computation.
This HSG, given the hypothesis of \autoref{thm:bs-kst}, stretches $k$ bits to $n$ bits (for arbitrarily large $n$), but the degree and explicitness of the generator grows as $n^{\exp(\exp(O(\log^*n)))}$.
Thus, this construction comes very close to answering~\autoref{q:2} 
without completely answering it. This HSG is essentially constructed via a repeated composition of the HSG in~\cite{KI04, NW94} where, for each step, it uses a different hard polynomial with gradually increasing  hardness. Due to this inherent iterative nature of the construction, it seems difficult to reduce the degree and explicitness of such HSG constructions to $\poly(n)$.

\paragraph{The need to go beyond design-based  HSGs:}
In the set up of boolean computation, observe that we cannot expect to have any PRG (or even HSG) of seed length $k$ to fool circuits of size much larger than $n2^k$ since we can construct a circuit of size $O(n2^k)$ to identify the range of the generator (consisting of $2^k$ strings of length $n$ each). A similar argument gives an upper bound of ${(dD)}^{O(k)}$ on the size of degree $D$ algebraic circuits which can be fooled by a HSG with seed length $k$ and degree $d$. Thus, while the stretch of any boolean PRG constructed via hardness of a boolean function is upper bounded by $n2^k$, in the algebraic setting, one could hope for a construction of hitting-set generators of stretch as large as $d^{k}$ from sufficiently hard explicit polynomial families.\footnote{Indeed, we know from elementary counting (or dimension counting) arguments that there exist degree $d$ polynomials in $k$ variables which require algebraic circuits of size nearly $\binom{d+k}{k}$, which can be approximated by $d^{\Omega(k)}$ when $k$ is much smaller than $d$, which is the range of parameters we work with in this paper.} 
However, till recently, there were no known constructions of  such HSGs with stretch larger than $2^k$ (even for non-constant $k$). Note that obtaining a stretch of $d^k$ in the algebraic world be meaningful even when $k = O(1)$ if $d$ is a growing parameter (as we could conceivably have  $d^k \gg n \cdot 2^k$). An HSG with strong enough parameters would answer the following very natural question.
\begin{question}\label{q:1} Let $k(d):\N \rightarrow \N$ be a non-decreasing function (a slow-growing function, or constant).
If there is an explicit (in the dense representation\footnote{When there are implications to Boolean classes involved, we ought to be careful about the notion of explicitness. For instance, we may have a polynomial $P$ that is explicit in the sense that there is a small algebraic circuit but the circuit involves arbitrary constants. In this instance, we mean that there is deterministic $\poly(d)$-time algorithm  that, on input $1^d$, outputs $P_d$ as a sum of monomials. })
polynomial family $\{P_d\}_{d \in \N}$, where $P_d$ is a $k(d)$-variate polynomial of degree $d$ such that any algebraic circuit computing it has size $d^{\Omega(k)}$, then is PIT in $\mathsf{DTIME}(n^{\poly(k(n))})$ ?

In particular, if $k = O(1)$, does an explicit $k$-variate polynomial family $\set{P_d}_{d\in \N}$, with $\deg(P_d) = d$, that requires size $d^{\Omega(1)}$ imply that PIT is in $\P$?
\end{question}

Note that the hitting set generators by Kabanets and Impagliazzo \cite{KI04} already answer the above question for the specific case when $k = \poly(\log n)$. The above question asks if a \emph{strenghtening} of their hypothesis (to ask for lower bounds for polynomial families on fewer variables) yields a stronger conclusion. In particular, is there a suitable lower bound for algebraic circuits that would yield a complete derandomization of PIT?\\

Another reason for looking beyond the design-based HSGs in the algebraic setting is that by definition, a design-based HSG is combinatorial.
Aesthetically, it seems desirable to have a route from algebraic lower bounds to algebraic pseudorandomness which does not rely on clever combinatorial constructions.

\paragraph{PRGs of Shaltiel \& Umans~\cite{SU05} and Umans~\cite{U03}:}

An alternative to the design-based PRGs in the boolean setting is the generator of Shaltiel and Umans~\cite{SU05}, and a related follow up work of Umans~\cite{U03}.
These generators are quite different from the design-based generators of Nisan and Wigderson~\cite{NW94} and, in particular, appear to be more \emph{algebraic} in their definition and analysis.
We refer the interested reader to the original papers~\cite{SU05, U03} for the formal definitions of these generators and further details.

The algebraic nature of these PRGs makes them good candidates for potential HSGs in the algebraic setting and, indeed, this work was partially motivated by this goal.
However, as far as we understand, it remains unclear whether there is an easy adaptation of these PRGs which works for algebraic circuits.
In particular, the hardness required for the analysis of the PRGs in~\cite{SU05, U03} appears to be inherently functional, i.e.
they assume that it is hard to evaluate the polynomial over some finite field.
In the context of algebraic complexity, the more natural notion of hardness is that it is hard to compute the polynomial syntactically as a formal polynomial via a small algebraic circuit.
\subsection{Our Results}
Our main result is the construction of a hitting-set generator that answers~\autoref{q:1}, for characteristic zero fields. In this paper, we work with slow-growing functions such as $k(d) = o(\sqrt{\log d})$ but we state our results in the introduction for the setting where $k = O(1)$. This is to present our main results without any notational clutter; the general statements can be found in the subsequent sections. 

\begin{theorem}[Main theorem; simpler form of \autoref{thm:derand-from-k-var-lbs}]\label{thm:derand-from-k-var-lbs-informal}
Assume that the underlying field $\F$ has characteristic zero.  Let $k\in \N$ and $\delta > 0$ be arbitrary constants.
Suppose $\set{P_{k,d}}_{d\in \N}$  is an explicit family of $k$-variate polynomials such that  $\deg(P_{k,d}) = d$ and $P_{k,d}$ requires circuits of size at least $d^{\delta}$.
Then, there are explicit hitting sets of size $\poly_{\delta, k}(s)$ for the class $\mathcal{C}(s,s,s)$. 
\end{theorem}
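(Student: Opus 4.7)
The plan is to construct an explicit, purely algebraic HSG $G\colon \F^k \to \F^s$ whose coordinates are suitable algebraic transformations of the hard polynomial $P = P_{k,d}$, and then use the Kabanets--Impagliazzo paradigm combined with Kaltofen's factoring theorem to turn any small-circuit annihilator of $G(\vecz)$ into a small circuit for $P$, contradicting hardness. A natural candidate to take is
\[
G(\vecz) \;:=\; \bigl(P(L_1(\vecz)),\,P(L_2(\vecz)),\,\ldots,\,P(L_s(\vecz))\bigr),
\]
where $L_1,\ldots,L_s$ is an explicit family of invertible affine maps $\F^k \to \F^k$ parameterized by distinct field elements, and $d = s^{c/\delta}$ for a constant $c$ to be fixed at the end. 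Because $k = O(1)$ and $\deg G = d = \poly(s)$, brute-force evaluation of $G$ on a grid $T^k$ with $|T| = O(ds)$ yields a hitting set of size $\poly(s)$; explicitness follows from explicitness of $\{P_{k,d}\}$ and of the $L_i$'s.

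For correctness, assume for contradiction that a nonzero $C(y_1,\ldots,y_s)$ computable by a size-$s$ circuit satisfies $C(G(\vecz)) = 0$, and choose such a $C$ minimising $\deg_{y_s} C$. By Kaltofen's theorem on the circuit complexity of polynomial factors (which needs characteristic zero so that the implicit Newton/Hensel iteration is well-defined), there is a power series $\phi \in \F[[y_1,\ldots,y_{s-1}]]$ with $C(y_1,\ldots,y_{s-1},\phi) = 0$ whose truncation up to degree $d$ is computable by a circuit of size $\poly(s)$. Substituting $y_j \leftarrow g_j(\vecz)$ for $j < s$ --- each $g_j$ being itself computable by a $\poly(s)$-size circuit in $\vecz$, since $P_{k,d}$ is explicit via dense evaluation --- gives a $\poly(s)$-size circuit computing $g_s(\vecz) = P(L_s(\vecz))$. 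Composing with the $O(k^2)$-size circuit for the affine inverse $L_s^{-1}$ then yields a $\poly(s)$-size circuit for $P_{k,d}$ itself; for $d = s^{c/\delta}$ with $c$ large enough, $d^\delta = s^c$ exceeds this $\poly(s)$ bound, contradicting the hardness hypothesis.

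The principal obstacle is to guarantee that the power-series root $\phi$ extracted by Kaltofen, once substituted with $y_j = g_j(\vecz)$, actually \emph{equals} $g_s(\vecz)$ rather than some other analytic branch of the algebraic relation $C = 0$. In the NW/KI analysis this is handled combinatorially: designs ensure that fixing $y_1,\ldots,y_{s-1}$ leaves $y_s$ as $P$ acting on fresh seed variables. Without designs, I would instead proceed via a Jacobian / transcendence-degree argument (valid in characteristic zero): arrange the $L_i$'s so that along a suitably generic curve $\vecz(t)$ in seed space, $g_s(\vecz(t))$ is the unique root of $C\bigl(g_1(\vecz(t)),\ldots,g_{s-1}(\vecz(t)),y_s\bigr)=0$ satisfying a prescribed initial condition, and lift this identification back to the formal-power-series level. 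Keeping the circuit-size accumulation polynomial in $s$ through the Kaltofen step, the substitution step, and the inversion of $L_s$ --- so that the final circuit for $P$ genuinely violates the $d^\delta$ lower bound --- is the central technical difficulty and is what seems to distinguish this purely algebraic route from the design-based constructions of \cite{NW94, KI04, AGS19, KST19}.
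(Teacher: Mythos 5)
Your plan has a structural circuit-size gap that no choice of the constant $c$ can repair. To substitute $y_j\leftarrow g_j(\vecz)$ into the truncated Kaltofen root $\phi$ you need explicit circuits for each $g_j=P\circ L_j$, and with no other computational handle on $P$ (you are, after all, trying to show $P$ has no small circuit) you are forced to use the dense representation, of size $\Theta(d^k)=s^{\Theta(ck/\delta)}$ when $d=s^{c/\delta}$. The reconstructed circuit for $g_s$, hence for $P$, then has size at least $s\cdot s^{\Theta(ck/\delta)}$, and for a contradiction you need this to be smaller than $d^\delta=s^c$, i.e.\ $\delta>\Theta(k)$. That fails already for $k=2$, $\delta=1$, and raising $c$ cannot help --- the target inequality $ck/\delta<c$ is scale-invariant in $c$, so your ``$\poly(s)$'' bound (whose exponent in fact grows linearly with $c$) can never beat $s^c$ once $k>\delta$. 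You also leave the root-identification problem (which power-series branch of $C=0$ equals $g_s$?) as an unresolved ``principal obstacle,'' and your reliance on Kaltofen's factoring theorem is exactly what the paper is built to avoid.

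The paper evades this barrier by a different route. Its generator is $\g_P(\vecz,\vecy)=(\Delta_0(P),\ldots,\Delta_n(P))$, where $\Delta_i(P)$ is the degree-$i$-in-$\vecy$ part of $P(\vecz+\vecy)$: the coordinates are \emph{linked by differentiation} rather than being independent affine shifts. This permits a Newton-iteration-style reconstruction (\autoref{lem:inductive-reconstruction}) that never invokes factoring; instead, under the non-degeneracy condition $(\partial_{x_n}C')\circ\g_P\neq 0$, it solves directly modulo $\inangle{\vecz}^{j+1}$ for each successive homogeneous component of $P$. The crucial point is that each of the $d$ iterations adds only $O(d^2\, n^{O(k)}\, sD)$ gates \emph{on top of} the circuit built so far --- no re-evaluation of $P$ or of $\g_P$ from a dense description --- so the final circuit has size $O(sD\,d^3\,n^{O(k)})$, polynomial in $d$ with a small fixed exponent rather than $d^{\Theta(k)}$. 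Finally, to handle arbitrarily small $\delta$ the paper applies a Kronecker substitution (\autoref{thm:derand-from-k-var-lbs}) that converts the $k$-variate $d^\delta$-hard polynomial into a $kt$-variate polynomial of degree roughly $d^{1/t}$ that is hard with exponent $4$; your plan has no analogue of this boosting step.
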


As mentioned earlier, the above theorem can also be extended to slow-growing functions $k:\N\rightarrow \N$ such as $k(d) = o(\sqrt{\log d})$, which would yield non-trivial (albeit superpolynomial sized) hitting sets. The formal statement can be seen in \autoref{thm:derand-from-k-var-lbs}. 
This theorem answers \autoref{q:1} affirmatively. In particular, if $k = O(1)$ and  if we have an explicit $k$-variate hard polynomial family $\set{P_d}_{d\in \N}$ with $\deg P_d = d$ that requires circuits of size $d^{\Omega(1)}$, then PIT $\in \P$. 
As alluded to in the introduction, we do not know of prior constructions of HSGs with these properties.

In addition to being interesting on its own,~\autoref{thm:derand-from-k-var-lbs-informal} leads to the following result which shows that bootstrapping of hitting sets can be done.
We recall the following simpler version of the Polynomial Identity Lemma\footnote{This version below is an immediate consequence of the Combinatorial Nullstellensatz~\cite{Alon99} but a simpler proof is to just use the fact that a univariate degree $d$ polynomial can have at most $d$ roots, one variable at a time (\Cref{lem:Comb-Null}).} \cite{O22,DL78,S80,Z79}. 

\begin{lemma*}[Folklore] Let $f$ be a nonzero $n$-variate polynomial of \emph{individual degree} at most $d$. Then, for any set $S \subseteq \F$ with $|S| > d$, there is a point $\veca \in S^{n}$ such that $f(\veca) \neq 0$. 
\end{lemma*}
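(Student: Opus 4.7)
The plan is to prove this by induction on the number of variables $n$, mirroring the hint in the footnote. The base case is the familiar univariate fact that a nonzero polynomial of degree at most $d$ has at most $d$ roots, so any set $S$ with $|S| > d$ must contain a non-root.

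For the inductive step, I would write $f$ as a polynomial in $x_n$ over the ring $\F[x_1, \dots, x_{n-1}]$, namely
\[
f(x_1, \dots, x_n) \;=\; \sum_{i=0}^{d} x_n^{i} \, g_i(x_1, \dots, x_{n-1}),
\]
where each coefficient $g_i$ is itself an $(n-1)$-variate polynomial of individual degree at most $d$. Since $f \neq 0$, at least one $g_i$ is nonzero. Apply the inductive hypothesis to that $g_i$ to obtain a point $(a_1, \dots, a_{n-1}) \in S^{n-1}$ with $g_i(a_1, \dots, a_{n-1}) \neq 0$. Substituting these values into $f$ yields a \emph{univariate} polynomial in $x_n$ of degree at most $d$ whose coefficient of $x_n^{i}$ is the nonzero field element $g_i(a_1, \dots, a_{n-1})$; this univariate is therefore nonzero. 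The base case then supplies an $a_n \in S$ with $f(a_1, \dots, a_n) \neq 0$, completing the induction.

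There is no real obstacle here: the only facts being used are that a nonzero univariate polynomial of degree $d$ has at most $d$ roots and that $\F[x_1, \dots, x_n] \cong \F[x_1, \dots, x_{n-1}][x_n]$. The key point to emphasize in the write-up is that individual degree is preserved under this iterative view, which is exactly what allows the induction to go through with the \emph{same} threshold $|S| > d$ at every step (in contrast to the total-degree version, where one would need $|S|$ to exceed the total degree).
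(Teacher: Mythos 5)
Your proof is correct and takes essentially the same route as the paper's: both peel off one variable at a time and apply the univariate fact that a nonzero degree-$d$ polynomial has at most $d$ roots, with the key observation that the individual-degree bound is preserved at each step. The only cosmetic difference is bookkeeping — you induct on $n$ by peeling off $x_n$ and descending into a nonzero coefficient $g_i$, while the paper fixes $x_1$ first (viewing $f$ as univariate in $x_1$ with coefficients in $\F(x_2,\dots,x_n)$) and iterates forward; the two are interchangeable.
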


This implies that the class of $k$-variate polynomials of individual degree $d$ have hitting sets of size $(d+1)^k$, irrespective of the circuit size. On the other hand, a simple counting argument shows that a random set of size $O(s^2)$ is a hitting set for the class of size $s$ circuits, with high probability.  The following theorem shows that even improving on this bound of $(d+1)^k$ by even \emph{one point} for small circuits would yield to a complete derandomization of PIT.

\begin{restatable}[Bootstrapping hitting sets]{theorem}{bssthm}
  \label{thm:bss}
  Assume that the underlying field $\F$ has characteristic zero. Let $k \in \N$ and  $\delta >0$ be arbitrary constants. Suppose that, for all large enough $s$, there is an explicit hitting set of size \( (s+1)^k - 1 \)
  for the class $\mathcal{C}(k,\text{i-deg: }s, s^\delta)$. Then, there is an explicit hitting set of size $\poly_{\delta, k}(s)$ for $\mathcal{C}(s,s,s)$.   
\end{restatable}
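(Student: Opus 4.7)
The plan is to extract an explicit hard $k$-variate polynomial family from the bootstrapping hypothesis and then invoke \autoref{thm:derand-from-k-var-lbs-informal} to conclude the existence of polynomial-size hitting sets for $\mathcal{C}(s,s,s)$.

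First, I would use a dimension-counting argument to produce a nonzero polynomial vanishing on the assumed hitting set. Fix $s$ large enough, and let $H_s \subseteq \F^k$ be the explicit hitting set of size $(s+1)^k - 1$ for the class $\mathcal{C}(k,\text{i-deg: }s, s^\delta)$ guaranteed by the hypothesis. The $\F$-vector space $V_s$ of $k$-variate polynomials of individual degree at most $s$ has dimension exactly $(s+1)^k$, which is strictly larger than $|H_s|$. The evaluation map $\mathrm{ev}_{H_s} : V_s \to \F^{|H_s|}$ sending a polynomial to the tuple of its values on $H_s$ is linear, so by rank--nullity it has a nonzero kernel. Pick any nonzero $Q_s \in \ker(\mathrm{ev}_{H_s})$; this is a nonzero $k$-variate polynomial of individual degree at most $s$ that vanishes identically on $H_s$.

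Next, I would argue that $Q_s$ is both hard and explicit. Since $Q_s$ is a nonzero polynomial of individual degree at most $s$ that vanishes on all of $H_s$, the hitting-set hypothesis forces $Q_s \notin \mathcal{C}(k,\text{i-deg: }s, s^\delta)$, i.e., $Q_s$ requires algebraic circuits of size strictly greater than $s^\delta$. Moreover, $Q_s$ is computable as a sum of monomials in deterministic time $\poly((s+1)^k) = \poly_k(s)$: enumerate $H_s$ (possible in time $\poly_k(s)$ by the explicitness of the hitting set), write down the $|H_s| \times (s+1)^k$ evaluation matrix in the standard monomial basis of $V_s$, and compute a nontrivial kernel vector via Gaussian elimination.

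Then I would reindex the family $\{Q_s\}$ to match the hypothesis of \autoref{thm:derand-from-k-var-lbs-informal}, which asks for an explicit $k$-variate $P_{k,d}$ of degree exactly $d$ with circuit lower bound $d^{\delta'}$. The total degree of $Q_s$ is at most $ks$, so by choosing $s \approx d/k$ and padding $Q_s$ to the exact target degree (for example, multiplying by a suitable monomial $x_1^m$, which preserves the hardness of $Q_s$ up to polynomial loss by standard arguments such as Kaltofen factorization or evaluation of the padding variables at a generic point) we obtain, for each large enough $d$, an explicit $k$-variate polynomial of degree exactly $d$ requiring circuits of size $\Omega_k(d^\delta) \geq d^{\delta'}$ for any constant $\delta' < \delta$. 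Since $k$ and $\delta$ are constants, the constants absorbed into $\Omega_k$ do not affect the final conclusion. Applying \autoref{thm:derand-from-k-var-lbs-informal} to this hard family then yields explicit hitting sets of size $\poly_{\delta,k}(s)$ for $\mathcal{C}(s,s,s)$, as required.

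The step I expect to be the most delicate is the reindexing and padding from $Q_s$ to an explicit polynomial of exact degree $d$ meeting the hypothesis of the main theorem without losing the hardness exponent. This amounts to a standard padding argument, but it must be carried out so that the resulting lower-bound exponent $\delta'$ remains a positive constant and the explicitness overhead remains $\poly_k(d)$; otherwise, the application of \autoref{thm:derand-from-k-var-lbs-informal} would not deliver the promised $\poly_{\delta,k}(s)$ bound.
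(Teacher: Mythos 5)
Your overall strategy matches the paper's: extract a hard $k$-variate polynomial from the hitting-set hypothesis via a rank/kernel argument (this is exactly the Heintz--Schnorr/Agrawal argument the paper cites as \autoref{thm:HS}), then feed it into the hardness-to-PIT theorem. The construction of $Q_s$ and its explicitness are fine.

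The gap is in the padding step. You adjust $Q_s$ to exact degree $d$ by multiplying by a monomial $x_1^m$ and then claim hardness is preserved ``up to polynomial loss'' via Kaltofen factorization or division elimination. This is a \emph{multiplicative} loss that grows polynomially with the degree: recovering $Q_s$ from a size-$t$ circuit for $x_1^m Q_s$ by interpolation/division costs at least a factor of $\deg = \Theta_k(s)$, and Kaltofen's factoring bound is also polynomial in the degree. So the padded polynomial is only guaranteed to require circuits of size $s^\delta / \poly_k(s)$, which is vacuous whenever $\delta$ is at most the exponent of that polynomial loss. Since the theorem allows an \emph{arbitrary} constant $\delta > 0$, this does not go through as written.

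Two ways to close it. (i) Pad \emph{additively}: set $Q_s'(\vecz) = Q_s(\vecz) + z_1^d$ for $d > \deg Q_s$. Then $\deg Q_s' = d$ exactly, and any circuit for $Q_s'$ yields one for $Q_s$ by subtracting a size-$O(\log d)$ circuit for $z_1^d$ --- an \emph{additive} loss, so the $s^\delta$ lower bound survives. (ii) Avoid padding altogether, as the paper does: invoke the general form \autoref{thm:derand-from-k-var-lbs} rather than the informal statement, whose footnote explicitly permits a ``sufficiently often'' family. Indexing by $d := \deg(P_s)$ is then tautologically exact, and the family is sufficiently often because $d \in [s^{\Omega(\delta/k)}, ks]$ (the upper bound from individual degree, the lower bound because a degree-$d$, $k$-variate polynomial has a trivial circuit of size $d^{O(k)}$, which must exceed $s^\delta$).
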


The above theorem answers \autoref{q:2} in a strong sense over characteristic zero fields.
However, it is crucial that we work with the class of algebraic circuits and not subclasses such as algebraic branching programs or algebraic formulas. 

\subsection*{A connection to Shub-Smale's $\tau$-conjecture}
The following conjecture, called the  $\tau$-Conjecture, attributed to Shub and Smale \cite{shub1995,smale98} is a well known conjecture in algebraic complexity.
\begin{definition}[$\tau(f)$]
  For a univariate polynomial $f(x)$, let $\tau(f)$ denote the number of additions, subtractions and multiplications required to compute $f(x)$ starting with with two leaves labelled with $1$ and $x$ respectively (all intermediate constants are expected to be computed from the leaf labelled $1$.

  Let $\operatorname{size}(f)$ refer to its algebraic circuit size, which is the smallest number of gates (with arbitrary constants on the wires) required to compute $f(x)$. 
\end{definition}

\begin{conjecture}[Smale's 4th Problem~\cite{smale98}, or Shub and Smale's $\tau$-conjecture \cite{shub1995}]\label{conj:tau-conj}
  There is an absolute constant $c$ such that the number of distinct integer zeroes for a univariate polynomial $f(x)$ is bounded by $(\tau(f))^c$.
\end{conjecture}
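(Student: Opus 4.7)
The $\tau$-conjecture is a notoriously open problem in algebraic complexity, so any ``plan'' is necessarily speculative rather than a concrete route to success. That said, I would organize the attack along three natural lines and be honest about where each one breaks. Denote by $Z(f)$ the set of distinct integer zeros of $f$.

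The first line is structural induction on a straight-line program of length $\tau(f)$. For the product step $f = g \cdot h$ one has $Z(f) = Z(g) \cup Z(h)$, so the count is controlled, and for the constant leaf $1$ and the variable leaf $x$ the bounds are trivial. The real obstruction is addition/subtraction: for $f = g + h$ the integer zeros of $f$ are the integers where $g(x) = -h(x)$, and there is no useful way to bound this quantity in terms of $|Z(g)|$ and $|Z(h)|$ alone, since two simple polynomials can agree at arbitrarily many integer points. A successful induction would demand a richer invariant attached to $f$ (some effective ``additive complexity'' of the zero locus) that is preserved under all three basic operations simultaneously, and no such invariant is known.

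A second line is to exploit fewnomial theory: Khovanskii's theorem and Descartes' rule of signs bound the real zero count in terms of the number of monomials, or the number of sign changes in the coefficient sequence. These are powerful when $f$ is sparse, but repeated squaring shows that $\tau(f) = O(k)$ is compatible with $2^k + 1$ distinct monomials (e.g.\ $(1+x)^{2^k}$), so these tools cannot be applied directly to the support of $f$. One would need a refined ``circuit-accessible'' invariant, perhaps a weighted count of monomials reachable through short programs, that continues to control integer zeros.

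A third line is a lower-bound approach: argue the contrapositive, that many integer zeros force large $\tau(f)$. This direction is tied directly to $\mathbf{P}_{\C}$ versus $\mathbf{NP}_{\C}$ in the Blum--Shub--Smale model, since Shub and Smale showed that the $\tau$-conjecture would imply that separation; even the toy question of whether $\tau(n!)$ grows faster than any polylogarithm of $n$ is wide open, and would already suffice to derive strong consequences. The main obstacle, bluntly, is that we possess no technique that meaningfully links the straight-line complexity of a univariate polynomial to its integer-root count beyond the trivial degree bound, and any genuine proof would almost certainly require a new idea outside the current arithmetic-circuit toolkit.
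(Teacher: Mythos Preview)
The statement you were asked to address is a \emph{conjecture} in the paper, not a theorem; the paper does not prove it and does not attempt to. It is quoted as Smale's 4th Problem, a well-known open question, and the paper only uses a strengthened variant (\autoref{conj:tau-conj-mod}) as a \emph{hypothesis} from which to derive derandomization consequences (\autoref{cor:tau-to-PIT}). So there is no ``paper's own proof'' to compare your proposal against.

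Your write-up correctly recognizes this: you explicitly say the problem is open and that any plan is speculative, and you survey the standard obstructions (addition gates in structural induction, the mismatch between $\tau$-complexity and sparsity that blocks fewnomial bounds, the link to $\mathbf{P}_\C$ vs.\ $\mathbf{NP}_\C$). That is an accurate and honest account of the state of affairs. There is nothing to correct mathematically, but also nothing to assess as a proof, since neither you nor the paper supplies one.
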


Consider the following strengthened conjecture that uses the algebraic size, $\operatorname{size}(f)$ instead. 

\begin{conjecture}[Strengthened $\tau$-conjecture]\label{conj:tau-conj-mod}
  There is an absolute constant $c$ such that the number of distinct integer zeroes for a univariate polynomial $f(x)$ is bounded by $(\operatorname{size}(f))^c$.
\end{conjecture}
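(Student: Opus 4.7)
The strengthened $\tau$-conjecture differs from Shub--Smale's original only in that the circuit computing $f$ is allowed to use arbitrary field constants for free on its wires, with size counted purely by the number of gates. My plan would be to reduce the strengthened statement to the original one via a specialization argument. Given a size-$s$ algebraic circuit for $f(x) \in \mathbb{Z}[x]$ using field constants $c_1, \ldots, c_t$ with $t \leq s$, the first step is to form the lift $F(y_1, \ldots, y_t, x) \in \mathbb{Z}[y_1, \ldots, y_t, x]$ obtained by promoting each $c_i$ to a fresh indeterminate $y_i$. This polynomial is computed by a circuit of the same shape whose only leaves lie in $\{0, 1, x, y_1, \ldots, y_t\}$, so it admits a straight-line program of length $O(s)$ starting from these leaves, and it satisfies $F(c_1, \ldots, c_t, x) = f(x)$.

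The heart of the argument would be to find an integer tuple $(d_1,\ldots,d_t) \in \mathbb{Z}^t$ of small $\tau$-complexity such that $g(x) := F(d_1,\ldots,d_t, x)$ retains every distinct integer root of $f$. If such a tuple exists with each $\tau(d_i) = O(\log s)$, then $\tau(g) = O(s \log s)$, and applying \Cref{conj:tau-conj} to $g$ would give the desired bound $\operatorname{size}(f)^{O(1)}$. To locate such a specialization, I would study the constructible set $Z \subseteq \mathbb{C}^t$ of those $\vec u$ for which every integer root of $f$ remains a root of $F(\vec u, x)$: it is cut out by the equations $F(\vec u, a) = 0$ as $a$ ranges over the (finitely many) integer roots of $f$, contains $(c_1,\ldots,c_t)$, and is therefore nonempty. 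A quantitative Nullstellensatz combined with a height-controlled substitution --- in the spirit of Kronecker's trick or arithmetic Bezout --- should, in favourable cases, extract from $Z$ an integer point of polylogarithmic bit-complexity.

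The main obstacle is twofold. First, the reduction above is only conditional: it still rests on the original $\tau$-conjecture, which is one of Smale's problems and has been open for over two decades. Second, the reduction itself is already nontrivial, because $Z$ need not contain any integer point of polylogarithmic $\tau$-complexity, and a transcendental $\vec c$ can in principle force the integer roots of $f$ to depend delicately on irrational data --- so a generic specialization may strictly shrink the integer-root set. An unconditional alternative would be to bypass specialization entirely and argue directly via Khovanskii-style fewnomial bounds or iterated Rolle's theorem applied to the straight-line program, obtaining a topological bound on the number of real (hence integer) zeros that depends only on the gate count. However, this direct route has resisted all attacks on even the original conjecture, which is why \Cref{conj:tau-conj-mod} is stated here only as a hypothesis; any successful proof would seemingly have to either introduce fundamentally new tools or combine a decisive refinement of the specialization/Nullstellensatz machinery with an eventual resolution of the classical conjecture.
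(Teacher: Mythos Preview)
The statement you are addressing is a \emph{conjecture}, not a theorem: the paper does not prove \autoref{conj:tau-conj-mod} and makes no attempt to do so. It is introduced purely as a hypothesis, and the paper's only use of it is the observation that, \emph{if} it holds with exponent $c$, then $P_s(x) = (x-1)\cdots(x-s)$ requires circuits of size $s^{1/c}$, which feeds into \autoref{thm:derand-from-k-var-lbs-informal} to yield \autoref{cor:tau-to-PIT}. There is therefore no ``paper's own proof'' to compare against.

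You appear to recognise this by the end of your write-up, where you concede that the conjecture ``is stated here only as a hypothesis'' and that a proof would require fundamentally new tools. That is the correct assessment. Your proposed reduction to the original $\tau$-conjecture via specialisation is not a proof strategy so much as a sketch of why the strengthening is genuinely harder: the variety $Z$ you define need not contain \emph{any} integer point, let alone one of small $\tau$-complexity, and you correctly flag this. Even granting such a point, you would still be conditioning on \autoref{conj:tau-conj}, which is itself open. So what you have written is a reasonable informal discussion of the difficulty of the conjecture, but it should not be labelled a proof proposal --- there is nothing here to grade against the paper, because the paper proves nothing about this statement.
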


If the above conjecture is true for a constant $c$, then this immediately implies that the polynomial $P_s(x) = (x-1)\cdots (x-s)$, which has $s$ distinct integer roots, requires algebraic circuits of size $s^{1/c}$. In other words, the conjecture immediately implies a lower bound on the size of algebraic circuits computing the polynomial family $\{P_s\}$. Thus, if we invoke this lower bound together with \autoref{thm:derand-from-k-var-lbs-informal}, and using the fact that the coefficients of $P_s$ are not too large, we have the following corollary 
\begin{restatable}{corollary}{tautopit}
\label{cor:tau-to-PIT}
If \autoref{conj:tau-conj-mod} is true then there exist hitting sets, that can be enumerated by a deterministic Turing machine in time $\poly(s)$, for algebraic circuits of size and degree at most $s$. 
\end{restatable}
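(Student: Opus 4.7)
The plan is to exhibit a univariate ($k=1$) explicit polynomial family that is hard under Conjecture \ref{conj:tau-conj-mod}, and then directly invoke \autoref{thm:derand-from-k-var-lbs-informal} with $k=1$ and an appropriate $\delta$. The candidate family is the classical one: $P_s(x) := \prod_{i=1}^{s}(x-i)$, which has exactly $s$ distinct integer roots by construction.

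First I would establish the hardness of $\{P_s\}$. Suppose \autoref{conj:tau-conj-mod} holds with absolute constant $c$. Then any univariate $f$ with $r$ distinct integer roots must have $\operatorname{size}(f)\geq r^{1/c}$. Applied to $P_s$, this immediately yields $\operatorname{size}(P_s)\geq s^{1/c}$. Next I would verify explicitness in the sense demanded by \autoref{thm:derand-from-k-var-lbs-informal}. By Vieta's formulas the coefficients of $P_s$ are the signed elementary symmetric polynomials in $1,2,\ldots,s$, which can be generated as a sum of monomials in deterministic $\poly(s)$ time (e.g., by iteratively multiplying $(x-i)$ into a running polynomial), and whose integer coefficients have bit-complexity at most $O(s\log s)$. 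Thus $\{P_s\}_{s\in\N}$ is an explicit $1$-variate polynomial family with $\deg P_s = s$ requiring circuits of size $s^{\delta}$ for $\delta := 1/c$.

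Having produced such a family, I would apply \autoref{thm:derand-from-k-var-lbs-informal} with $k=1$ and $\delta = 1/c$. The conclusion is that there exist explicit hitting sets of size $\poly_{\delta,k}(s)=\poly(s)$ for the class $\mathcal{C}(s,s,s)$ of $s$-variate, degree-$s$, size-$s$ algebraic circuits. The one remaining point is to translate ``explicit'' (in the algebraic sense used by that theorem) into ``enumerable by a deterministic Turing machine in $\poly(s)$ time.'' This is where I would spend a little care: the generator $G$ produced in \autoref{thm:derand-from-k-var-lbs-informal} computes each coordinate as a polynomial in the hard polynomial $P_s$ evaluated at suitable shifts/powers of the seed; since $P_s$ has coefficients of polynomial bit-size and since the seed of the generator ranges over a set of $\poly(s)$ points whose bit-complexity we may choose to be polynomially bounded (e.g., integers in $\{1,\ldots,\poly(s)\}$), each coordinate of $G$ on each seed is an integer/rational of $\poly(s)$ bit-complexity computable in deterministic $\poly(s)$ time. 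Enumerating $G$ over all seed values therefore yields a hitting set that a deterministic Turing machine writes down in $\poly(s)$ time, which is exactly the statement of \Cref{cor:tau-to-PIT}.

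The main (and essentially the only) obstacle I anticipate is the bit-complexity bookkeeping in the last step, because \autoref{thm:derand-from-k-var-lbs-informal} is stated in the algebraic model where arbitrary field constants are free, whereas the corollary is a genuine Boolean enumeration statement. However, since the only field constants used by the generator come from the coefficients of $P_s$ (which are polynomially bounded integers) and from the seed values (which we are free to choose as small integers), this translation is routine and loses at most a $\poly(s)$ factor in runtime; no new ideas are required beyond a careful accounting of precision along the evaluation of $G$.
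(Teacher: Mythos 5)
Your proposal is correct and follows essentially the same route as the paper: take $P_s(x) = \prod_{i=1}^{s}(x-i)$, use the strengthened $\tau$-conjecture to obtain the lower bound $\operatorname{size}(P_s)\ge s^{1/c}$, invoke \autoref{thm:derand-from-k-var-lbs-informal} with $k=1$, $\delta = 1/c$, and then close the gap between algebraic explicitness and Boolean enumerability by tracking bit-complexity through the generator, using that the coefficients of $P_d$ (and hence of its Kroneckered variant used inside \autoref{thm:derand-from-k-var-lbs}) have $\poly(s)$-bit integer descriptions and the seeds range over small integers. You correctly identified that the bit-complexity bookkeeping is the only non-trivial step beyond a black-box application of the main theorem, and the paper's own proof resolves it in exactly the way you describe.
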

\noindent
Thus, if the strengthened $\tau$-conjecture \autoref{conj:tau-conj-mod} is true, then PIT is in $\P$.

\medskip

Note that it is important that the coefficients of $P_s$ are bounded by $s! = 2^{O(s \log s)}$. 
Strassen~\cite{Strassen74} showed (amongst other things) an \emph{unconditional} lower bound of $\Omega(d^{1/3})$ on the circuit complexity of the polynomial $Q_d = \sum_{i =0}^d 2^{2^i}x^i$. When invoked with the polynomial $Q_d$, \autoref{thm:derand-from-k-var-lbs-informal} would yield a construction of hitting set consisting of $\poly(s)$ points for general algebraic circuits of size $s$. However, this hitting set is not efficiently enumeratable by a Turing machine in polynomial time since the coefficients of $Q_d$ are too large.  In contrast to this, \autoref{cor:tau-to-PIT} gives a deterministic construction of hitting sets of polynomial size and polynomial bit complexity, although the correctness of the construction is conditional on \autoref{conj:tau-conj-mod}. 


\subsection{An overview of the proof}

All the theorems are a consequence of properties of our hitting-set generator:

\begin{definition}[The generator]\label{def:prg}
For any $k$-variate polynomial $P(\vecz)$, define the map $\g_P: \F^k \times \F^k \to \F^{n+1}$ as follows:
\[
\g_P(\vecz,\vecy) = \left(\D_{0}(P)(\vecz, \vecy), \D_{1}(P)(\vecz, \vecy), \ldots,  \D_{n}(P)(\vecz, \vecy) \right), 
\]
where $\Delta_{i}(P)$ is the homogeneous degree $i$ (in $\vecy$) component in the Taylor expansion of $P(\vecz + \vecy)$, i.e.
\[\D_{i}(P)(\vecz, \vecy) = \sum_{\vece \in \N^k, \abs{\vece} = i} \frac{\vecy^{\vece}}{\vece !} \cdot \frac{\partial P}{\partial \vecz^{\vece}}. \quad\quad \text{(here, $\abs{\vece}:=e_1+\cdots+e_k$ and $\vece! := e_1! \cdots e_k!$)} \qedhere\]
\end{definition}
It is clear that the above definition is $d^{O(k)}$-explicit, where $d = \deg(P)$, as we can express $P$ as a sum of $d^k$ monomials and compute each component of $\g_P(\vecz,\vecy)$ with a small additional cost. The proof proceeds by showing that the above generator $\g_P$ is an HSG provided $P(\vecz)$ is \emph{hard enough}. 

\begin{restatable}[$\g_P$ is an HSG]{theorem}{mainthm}\label{thm:main}
Assume that the underlying field $\F$ has characteristic zero. Let $P$ be a $k$-variate polynomial of degree $d$, for arbitrary integers $k,d > 0$. 
Suppose $P$ cannot\footnote{Note that this statement is meaningful only when $d \gg n^3$ as $P$ always has an algebraic circuit of size at most $d^{k}$.} be computed by algebraic circuits of size  $\tilde{s} = \inparen{s \cdot D \cdot d^3 \cdot n^{10k}}$ for parameters $n,D,s$. Then, for any $(n+1)$-variate polynomial $C(x_0,\ldots, x_n) \in \mathcal{C}(n+1,D,s)$, we have
    \[
      C \neq 0 \Longleftrightarrow  C\circ {\g_P}(\vecz,\vecy) \neq 0.
    \]
\end{restatable}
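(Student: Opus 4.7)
We argue by contrapositive: given a nonzero $C(x_0,\ldots,x_n)$ of size $\leq s$ and degree $\leq D$ with $C \circ \g_P \equiv 0$, the plan is to produce an algebraic circuit of size at most $\tilde{s}$ for $P$. The construction combines a minimality argument, a partial specialisation, and Kaltofen's factorisation theorem for algebraic circuits.

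First, pick $C$ so that it has the minimum positive degree in $x_0$ among all size-$\leq s$ relations. (The degenerate case where no small relation involves $x_0 = \D_0(P)$ leaves a small algebraic dependence among $\D_1(P),\ldots,\D_n(P)$ alone; this can be handled by applying the same argument with $x_1$ and $\D_1(P)$ playing the roles of $x_0$ and $P$, and recovering $P$ from $\D_1(P) = \sum_j y_j \partial_j P$ via integration in characteristic zero.) Since $\operatorname{char}(\F) = 0$, $\partial_{x_0}C$ is a nonzero polynomial of strictly smaller $x_0$-degree, so by minimality of $C$ we have $(\partial_{x_0}C) \circ \g_P \not\equiv 0$. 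Hence some point $(\vec{a}, \vec{b}) \in \F^{k} \times \F^{k}$ satisfies $(\partial_{x_0}C)(\g_P(\vec{a}, \vec{b})) \neq 0$.

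Next, freeze $\vecy = \vec{b}$ and for each $i \geq 1$ define $E_i(\vecz) := \D_i(P)(\vecz, \vec{b})$, which is a $k$-variate polynomial in $\vecz$ of degree at most $d$. Set
\[
\tilde{C}(x_0, \vecz) \;:=\; C\bigl(x_0,\, E_1(\vecz),\,\ldots,\,E_n(\vecz)\bigr).
\]
From $C \circ \g_P \equiv 0$ we get $\tilde{C}(P(\vecz), \vecz) \equiv 0$, while from the witness above we have $\partial_{x_0}\tilde{C}(P(\vec{a}), \vec{a}) \neq 0$. So $\tilde{C}$ is a nonzero polynomial that vanishes at $x_0 = P(\vecz)$, and $(x_0 - P(\vecz))$ is a \emph{simple} factor of $\tilde{C}$ in the UFD $\F[\vecz][x_0]$ (a double factor would force $\partial_{x_0}\tilde{C}$ to vanish at $(P(\vec{a}),\vec{a})$). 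Each $E_i$ has at most $\binom{k+d}{k} = O(d^k)$ monomials and, because $\g_P$ is $d^{O(k)}$-explicit, admits a sum-of-monomials circuit of size $d^{O(k)}$; composing these with the assumed size-$s$ circuit for $C$ yields a circuit for $\tilde{C}$ of size $s + n \cdot d^{O(k)}$ and total degree at most $D \cdot d$ on $k+1$ variables.

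Finally, apply Kaltofen's theorem: in characteristic zero every factor of a circuit-computable polynomial admits a circuit of size $\poly(\operatorname{size},\deg)$, so $x_0 - P(\vecz)$ does, and subtracting the leaf $x_0$ gives a circuit for $P(\vecz)$. The principal remaining work lies in bookkeeping---tracking the size blow-up through the dense expansion of the $E_i$'s and through Kaltofen's factorisation bound tightly enough to match the prescribed bound $\tilde{s} = s \cdot D \cdot d^3 \cdot n^{10k}$; once this is done, the existence of such a circuit contradicts the hardness hypothesis on $P$. The hardest parts of the proof will be (i) this fine-grained accounting of constants, and (ii) cleanly disposing of the degenerate case where no small relation involves the $x_0$ coordinate.
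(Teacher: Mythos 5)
Your plan fails at a fundamental accounting step, and no amount of bookkeeping can rescue it. To feed $\tilde{C}(x_0,\vecz) = C(x_0, E_1(\vecz),\ldots,E_n(\vecz))$ into Kaltofen's factorization theorem you first need a circuit for $\tilde{C}$, and hence circuits for the specialised coordinates $E_i(\vecz) = \Delta_i(P)(\vecz,\vecb)$. These are honest degree-$d$, $k$-variate polynomials, built from partial derivatives of the very polynomial $P$ whose circuit size we do not yet know; the only circuit you have for them is the dense sum-of-monomials circuit of size $d^{O(k)}$, which you acknowledge. But then any circuit you construct for $\tilde{C}$, and therefore any factor circuit Kaltofen hands back, has size at least $d^{O(k)}$ --- and this is already at least as large as the trivial dense circuit for $P$ itself. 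Since the hardness hypothesis is $\tilde{s} = s\cdot D\cdot d^3\cdot n^{10k}$, which must be well below $d^k$ for the theorem to have any content (see the footnote in the statement), your argument produces a circuit for $P$ that is far too large to contradict the hypothesis. In short: the specialisation at $\vecy = \vecb$ throws away exactly the structure that makes the problem non-trivial.

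The paper's route is designed precisely to avoid paying this $d^{O(k)}$ cost up front. Instead of writing down $\Delta_i(P)$ globally, it reconstructs $P$ one homogeneous component at a time, always working modulo an ideal $\inangle{\vecz}^{j+1}$, so that at step $j$ one only ever needs to manipulate polynomials of degree at most $j$. The non-degeneracy condition ($C'\circ\g_P = 0$ while $\partial_{x_n}(C')\circ\g_P\neq 0$) is exploited in \autoref{lem:inductive-reconstruction} to express $\Delta_n(P_{n+j})(\vecz,\veca)$ as $\pfrac{-1}{\Psi(\veca)}\,C'(\Gamma_{j-1,\veca})$ modulo $\inangle{\vecz}^{j+1}$, a formula that only invokes the \emph{previously built} circuit for the lower-degree derivatives; Euler's formula plus an interpolating set of $n^{O(k)}$ points then recovers $P_{n+j}$ itself. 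The total blow-up over all $d$ steps stays \emph{additive} (size $O(s'\cdot d^3\cdot n^{O(k)})$) only because the circuit is grown by stacking layers and \emph{partially} homogenising the freshly added portion (\autoref{lem:partial-hom}); a naive global homogenisation at each step would also overshoot.

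A second, smaller gap: your degenerate case ("no small relation involves $x_0$") is dispatched by "applying the same argument with $x_1$," but nothing guarantees the minimal relation touches $x_1$ either. The paper sidesteps this by taking $C$ minimal in the \emph{number of variables it depends on} and always differentiating with respect to $x_n$, which is needed anyway because the reconstruction specifically isolates $\Delta_n$ (the coordinate whose $\vecz$-degree drops by exactly $n$, enabling the degree-by-degree lift). Also note that your claim "$(\partial_{x_0}C)\circ\g_P\not\equiv 0$ by minimality" does not follow in one step: $\partial_{x_0}C$ could vanish composed with $\g_P$ even if $C$ is minimal in $x_0$-degree, and the paper has to search through a whole chain of derivatives of $C$ to find the correct index $i$ at which the transition from zero to nonzero composition occurs.
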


The proof of \autoref{thm:derand-from-k-var-lbs-informal} is relatively straightforward from \autoref{thm:main} by setting some parameters appropriately and using a \emph{Kronecker-ing trick} (see \autoref{sec:derandomization-from-hard-polynomials}). 
To show that the HSG in \autoref{def:prg} is  indeed a hitting-set generator for $\mathcal{C}(n+1,D,s)$,  we focus our attention on a purported nonzero polynomial $C(\vecx)$, of circuit complexity $s$ and degree $D$, that is not \emph{fooled} by the generator, i.e. $C\circ\g_P$ is identically zero. We use this identity to reconstruct a small circuit for $P$ which contradicts its hardness. This would imply that all polynomials in $\mathcal{C}(n+1,D,s)$ are fooled by the HSG.  

In order to reconstruct a circuit for $P$ from the circuit for $C$, we focus on the so-called \emph{non-degenerate} case and address it in~\autoref{lem:inductive-reconstruction}, which is our key technical lemma. Before discussing the main ideas in the proof of~\autoref{lem:inductive-reconstruction}, we first discuss some of the details of the reduction to the non-degenerate case. 

\paragraph{Reducing to the non-degenerate case : }In the non-degenerate case we insist that, in addition to having $C \circ \g_P = 0$, we have $(\partial_{x_n}C)\circ \g_P \neq 0$; i.e. the derivative of $C$ with respect to the \emph{last} variable $x_n$ is \emph{fooled} by the generator.

Given a nonzero circuit $C$ such that $C \circ \g_P = 0$, and we wish to construct a circuit $C'$ with the above stronger condition.
We may assume without loss of generality that the circuit $C$ is minimal in the sense that all circuits of the same size depending on fewer variables are fooled by the generator; in particular, $C$ depends non-trivially on the last variable $x_n$.

We consider the circuit $\tilde{C}$ obtained by substituting the generator for all coordinates except $x_n$. Interpreting this as a univariate polynomial in $x_n$, with coefficients in $\F(\vecz,\vecy)$, 
\[
  \tilde{C}(x_n) = C(g_1,\ldots, g_{n-1},x_n)
\]
where $\g_P = (g_1,\ldots, g_{n-1},g_n)$. The minimality of $C$ allows us to argue $\tilde{C}(x_n)$ is a nonzero polynomial (for otherwise, $\hat{C} = C(x_1,\ldots, x_{n-1},a) \neq 0$, for a random\footnote{For an infinite field $\F$, by choosing a ``random'' $a\in \F$ we mean choosing $a$ uniformly from a suitably large but finite subset of $\F$.} $a\in \F$, is also not fooled by the generator since $\hat{C} \circ \g_P = \tilde{C}(a) = 0$, contradicting the minimality of $C$). Using the Remainder theorem, this then implies that there must be some $0\leq i \leq \deg_{x_n}(\tilde{C})$ such that
\begin{align*}
  \inparen{\partial_{x_n^i}\tilde{C}}(g_n) & = 0,\\
  \inparen{\partial_{x_n^{i+1}}\tilde{C}}(g_n) & \neq 0.
\end{align*}
Hence, the circuit $C' = \partial_{x_n^i}(C)$ satisfies the \emph{non-degeneracy} condition. Standard interpolation arguments shows that the circuit complexity of $C'$ is at most $O(sD)$ and we work with this circuit~\footnote{This is also where the dependency of the hardness of $P$ on the degree $D$ appears. We suspect that this dependency can be removed.}. 

\paragraph*{The proof of~\autoref{lem:inductive-reconstruction} : }
The proof of the lemma can be viewed as a variant of the standard Newton Iteration (or Hensel lifting) based argument often used in the context of root finding, although there are some crucial differences.
We iteratively construct the polynomial $P(\vecz)$ one homogeneous component at a time (recall that $P(\vecz)$ is a $k$-variate polynomial of degree $d$).
In fact, our induction hypothesis needs to be a bit stronger than this.
For our proof, we maintain the invariant that at the end of the $i^{th}$ iteration, we have a multi-output circuit which computes all the partial derivatives of order at most $n$ of all the homogeneous components of $P(\vecz)$ of degree at most $i + n$.
However, for this overview, we ignore this technicality and pretend that we are directly working with the homogeneous components of $P(\vecz)$.

For the base case, we assume that we have access to all the homogeneous components of $P$ of degree at most $n$, which are homogeneous polynomials of degree at most $n$ on $k$ variables and are trivially computable by a circuit of size at most $n^{5k}$.
the presumed hardness of $P$ for $d \gg n$.
Thus, we have $n$ homogeneous components of $P(\vecz)$, and the goal is to use them and the non-degeneracy assumption to reconstruct all of $P$.
Let us assume that we have already computed $P_0,\ldots, P_{i}$, where $P_j$ is the homogeneous component of $P(\vecz)$ of degree equal to $j$.
We now focus on recovering the homogeneous component $P_{i+1}$ of degree equal to $i + 1$.
Observe that $\Delta_n(P_{i+1})$ is a homogeneous (in $\vecz$) polynomial of degree $(i-n+1)$.
We show that given the non-degeneracy condition in the hypothesis of the lemma, there is a small circuit such that, modulo the ideal ${\inangle{\vecz}^{i-n+2}}$, it computes $\D_n(P_{i + 1})(\vecz,\vecy)$.
Since $\D_n(P_{i+1})(\vecz,\vecy)$ is essentially a \emph{generic} linear combination of $n$-th order derivatives of $P_{i+1}(\vecz)$, it is not hard to show\footnote{In particular, this part of the argument relies on the stronger hypothesis that we have access to each of the order $n$ partial derivatives of $P_0, P_1, \ldots, P_{i}$.}
that we can obtain a small circuit that outputs each of the $n$-th order partial derivatives of $P_{i+1}(\vecz)$, modulo higher degree monomials.
Then we would be able to reconstruct $P_{i+1}(\vecz) \bmod{\inangle{\vecz}^{i+2}}$ via repeated applications of the Euler's differentiation formula for homogeneous polynomials:

\begin{fact}[Euler's formula for differentiation of homogeneous polynomials]
  \label{fact:euler}
If $A(x_1,\ldots, x_k)$ is a homogeneous polynomial of degree $t$, then $\sum_{i=1}^k x_i \cdot \partial_{x_i} A = t \cdot A(x_1,\ldots, x_k)$. 
\end{fact}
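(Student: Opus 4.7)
The plan is to prove Euler's identity in the most standard way, by reducing to monomials via linearity. Since differentiation, multiplication by $x_i$, and summation are all $\F$-linear operations, the operator $E := \sum_{i=1}^k x_i \partial_{x_i}$ is linear, and $A$ is a linear combination of degree-$t$ monomials. Hence it suffices to verify that $E(\vecx^{\vece}) = t \cdot \vecx^{\vece}$ for every exponent vector $\vece = (e_1,\ldots,e_k)$ with $|\vece| = e_1+\cdots+e_k = t$.

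For such a monomial, I would compute directly:
\[
  x_i \cdot \partial_{x_i}\!\left(x_1^{e_1}\cdots x_k^{e_k}\right) \;=\; x_i \cdot e_i \, x_1^{e_1}\cdots x_i^{e_i-1}\cdots x_k^{e_k} \;=\; e_i \cdot \vecx^{\vece},
\]
with the convention that the right-hand side is $0$ when $e_i = 0$ (consistent with $\partial_{x_i}$ killing any monomial in which $x_i$ does not appear). Summing over $i \in [k]$ gives
\[
  E(\vecx^{\vece}) \;=\; \left(\sum_{i=1}^k e_i\right) \vecx^{\vece} \;=\; t \cdot \vecx^{\vece},
\]
which is exactly the claim on monomials. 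Linearity then extends this to $A$.

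There is really no main obstacle here; the identity is a one-line consequence of the monomial computation. If one preferred a coordinate-free argument, an equally short alternative would be to use the defining property of a homogeneous polynomial of degree $t$, namely $A(\lambda \vecx) = \lambda^t A(\vecx)$, and differentiate both sides with respect to $\lambda$: the chain rule yields $\sum_i x_i (\partial_{x_i} A)(\lambda \vecx) = t \lambda^{t-1} A(\vecx)$, and setting $\lambda = 1$ produces the identity. Since the paper is working over a field of characteristic zero, the factor of $t$ is not an issue (and in fact the monomial proof above works over any commutative ring). I would present the monomial proof, as it is the shortest and most self-contained.
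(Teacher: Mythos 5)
Your argument is correct. The paper states this as a \emph{Fact} and does not supply a proof, treating it as well known, so there is no paper proof to compare against; your monomial-by-monomial computation via linearity of the operator $\sum_i x_i \partial_{x_i}$ is the standard and entirely adequate justification, and it works over any commutative ring as you observe. The alternative you sketch (differentiating $A(\lambda \vecx) = \lambda^t A(\vecx)$ in $\lambda$ and setting $\lambda = 1$) is also fine over characteristic zero, which is the setting of the paper.
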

One crucial point in this entire reconstruction is that each step of the reconstruction only incurs an \emph{additive} blow-up in size and hence can be repeated for polynomially many steps to recover each homogeneous part of $P$ (\autoref{fig:Blahprime} in \autoref{sec:main-theorem} contains a pictorial description of the inductive step). 

However, we are still left with the task of getting rid of the higher order terms as we have only constructed a circuit computing $P_{i+1}$ modulo the ideal $\inangle{\vecz}^{i+2}$.
We need to extract the lowest degree homogeneous parts from the outputs.
The standard way to proceed here is to perform a \emph{homogenisation} but this needs to be done carefully.
Typically, extracting a certain homogeneous part incurs a multiplicative blow-up in size which is unaffordable in this setting as this needs to be performed for $d$ steps!
Fortunately, the structure of the circuit built has the property that each inductive step adds more gates atop the outputs of the previous steps and hence it suffices to \emph{only homogenise} the newly added gates.
This allows us to argue that each inductive step incurs only an additive blow-up of $d^2 \cdot n^{O(k)} \cdot sD$. By performing the reconstruction step to extract all homogeneous components of $P$, we can construct a small enough circuit for $P$, contradicting the hardness of $P$. That would complete the proof of \autoref{thm:main}. 


\paragraph{Similarities with PRGs of Shaltiel and Umans~\cite{SU05, U03} and list decoding algorithms of Kopparty~\cite{K15}: } We remark that at a high level, our construction of the HSG was inspired by the constructions by Shaltiel and Umans~\cite{SU05, U03}, although the precise form of our generator seems different from those in~\cite{SU05, U03}.  We also note that the set up of induction we have in the proof of~\autoref{lem:inductive-reconstruction} is very similar to the set up used by Kopparty~\cite{K15} in the context of list decoding  Multiplicity codes. More precisely, our induction is similar to what is used in constructing a power series expansion of  a non-degenerate solution of the univariate Cauchy-Kovalevski differential equations, which are used in~\cite{K15}. The key difference is that although we work with a multivariate setting (and hence deal with a \emph{partial} differential equation of high order and high degree), the iterative proof of~\autoref{lem:inductive-reconstruction} resembles the list decoding algorithm for univariate multiplicity codes in~\cite{K15} (which deals with an \emph{ordinary} differential equation of high order and high degree). It appears to be of interest  to understand this analogy further. 

\paragraph{Relating \autoref{thm:bss} and results of Jansen and
  Santhanam~\cite{JS12}: }The hypothesis of \autoref{thm:bss}, for $k = 1$, bears some similarities with a result of Jansen and Santhanam~\cite{JS12}. Jansen and Santhanam showed, roughly speaking\footnote{The exact theorem statement is more complicated, and also handles a larger range of parameters, but the above description contains the essence of their main statement.}, that if the class $\mathcal{C}(1,d,\polylog(d))$  has a hitting set of size $d$ (one point fewer than the trivial hitting set of size $d+1$) that \emph{can be enumerated by small, succinctly encodable $\mathsf{TC}^0$ circuits}, then $\mathrm{Perm}$ does not have polynomial sized \emph{constant-free} algebraic circuits.

Our hypothesis, for $k=1$, is similar in the sense that it requires a saving of one point from the trivial hitting set for the appropriate class, but we only need the standard notion of explicitness for the hitting set. However, the two results work for different ranges of parameters --- their hypothesis requires the hitting set for $C(1,d,\polylog(d))$ whereas ours requires the same for $C(1,d,d^\delta)$ for some constant $\delta > 0$. Furthermore, our conclusion is also in terms of derandomizing PIT, and not a lower bound. It would be interesting to see if our hypothesis is also sufficient to obtain a similar conclusion as Jansen and Santhanam.

\section{Notation and preliminaries}

We follow the following notations. 
\begin{itemize}
\item We use $\N$ to denote the set of natural numbers $\{1, 2, 3, \dots\}$.

\item Throughout the paper, we think of $\F$ as a field of characteristic zero (or large enough). 
  
\item We use boldface letters such as $\vecz$ to denote tuples: $\vecz = \inparen{z_1, z_2, \ldots, z_k}$; in almost all instances, the length of the tuple will be clear from context. For an exponent vector $\vece$, we shall use $\vecz^{\vece}$ to denote the monomial $z_1^{e_1}\cdots z_k^{e_k}$. Let $\abs{\vece} := \sum e_i$. 

\item We use $\partial_{\vecz^{\vece}}(P(\vecz))$ to denote the partial derivative $\frac{\partial^{\abs{\vece}}(P)}{\partial \vecz^{\vece}}$. 
  
\item We use $\inangle{\vecz}^{i}$ to denote the ideal in $\F[\vecz]$ generated by all degree $i$ monomials in $\vecz$.

\item We use $\mathcal{P}(k,d)$ to denote the class of $k$-variate polynomials of degree at most $d$. 
\end{itemize}

\subsection{Algebraic circuits}
\begin{definition}[Algebraic circuits]\label{defn:alg-circuits}
  An \emph{algebraic circuit} is a directed acyclic graph whose internal nodes are labelled by $+$ and $\times$, with leaves (in-degree zero nodes) labelled by variables or field constants. The addition and multiplication gates naturally compute the sum and product of its children respectively. Edges may carry arbitrary field constants whose semantic behaviour is to scale the value passing through it.

  The \emph{size} of an algebraic circuit refers to the number of gates in the circuit (and is independent of the constants that may appear on the edges/wires).
\end{definition}

We use the notation $\mathcal{C}(n, d, s)$ to denote the class of $n$-variate polynomials of degree bounded by $d$ that are computable by circuits of size at most $s$. Similarly, $\mathcal{C}(n, \text{i-deg: }d,s)$ refers to the class of $n$-variate polynomials of \emph{individual degree} bounded by $d$  computable by circuits of size $s$.

\subsection{PIT preliminaries}

The following well-known lemma gives an exponential (in the number of variables) sized hitting set for low degree polynomials. 

\begin{lemmawp}[\cite{O22,DL78,S80,Z79}]\label{lem:Schwartz-Zippel} Let $f$ be a nonzero $n$-variate polynomial of degree at most $d$. Then for any set $S \subseteq \F$ with $|S|> d$, there is a point $\veca \in S^{n}$ such that $f(\veca) \neq 0$. 
\end{lemmawp}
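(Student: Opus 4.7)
The plan is to prove the lemma by induction on the number of variables $n$.

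For the base case $n = 1$, I would appeal to the standard fact that a nonzero univariate polynomial of degree at most $d$ over any field has at most $d$ roots; since $|S| > d$, some $a \in S$ must satisfy $f(a) \neq 0$.

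For the inductive step, I would write $f$ as a polynomial in $x_n$ with coefficients in $\F[x_1, \ldots, x_{n-1}]$:
\[
f(x_1, \ldots, x_n) \;=\; \sum_{i=0}^{d} f_i(x_1, \ldots, x_{n-1}) \cdot x_n^{i},
\]
where each $f_i$ has total degree at most $d - i$. Since $f \neq 0$, pick any index $i$ for which $f_i \neq 0$. Then $f_i$ is an $(n-1)$-variate polynomial of degree at most $d - i \leq d < |S|$, so by the inductive hypothesis there exist $a_1, \ldots, a_{n-1} \in S$ with $f_i(a_1, \ldots, a_{n-1}) \neq 0$. The univariate polynomial $g(x_n) := f(a_1, \ldots, a_{n-1}, x_n)$ then has its coefficient of $x_n^{i}$ equal to the nonzero quantity $f_i(a_1, \ldots, a_{n-1})$, so $g$ is a nonzero polynomial in $\F[x_n]$ of degree at most $d$. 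Applying the base case to $g$ with the set $S$ produces some $a_n \in S$ with $f(a_1, \ldots, a_n) = g(a_n) \neq 0$, completing the induction.

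There is no substantive obstacle here; the argument is classical and each step is mechanical. The only point worth flagging is why the single threshold $|S| > d$ is strong enough at every level of the recursion: after peeling off the variable $x_n$, the polynomial $f_i$ that we pass to the inductive hypothesis still has total degree at most $d$ (in fact at most $d - i$), so the very same hypothesis on $|S|$ continues to apply without any weakening.
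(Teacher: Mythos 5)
Your proof is correct. The paper itself states \Cref{lem:Schwartz-Zippel} without proof, deferring to the classical references \cite{O22,DL78,S80,Z79}, so there is no in-paper proof to compare against; but your variable-by-variable peeling argument (fix a nonzero coefficient $f_i$, specialize the first $n-1$ variables, then handle the resulting nonzero univariate) is exactly the same strategy the paper uses to prove the companion individual-degree statement, \Cref{lem:Comb-Null}, just phrased there as ``repeat the univariate argument for each variable'' rather than as a formal induction.
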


\noindent
The following is a variant for the case of polynomials of \emph{individual degree} bounded by $d$. 

\begin{lemma}\label{lem:Comb-Null} Let $f$ be a nonzero $n$-variate polynomial of \emph{individual degree} at most $d$. Then, for any set $S \subseteq \F$ with $|S| > d$, there is a point $\veca \in S^{n}$ such that $f(\veca) \neq 0$. 
\end{lemma}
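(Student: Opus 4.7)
The plan is to prove the lemma by induction on the number of variables $n$, using the fact that a nonzero univariate polynomial of degree at most $d$ has at most $d$ roots (and hence, since $|S| > d$, must be nonzero somewhere on $S$).

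For the base case $n=1$, the polynomial $f$ is univariate of degree at most $d$, so it has at most $d$ roots in $\F$; since $|S| > d$, there must be some $a \in S$ with $f(a) \neq 0$.

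For the inductive step, I would write $f$ as a polynomial in $x_n$ with coefficients in $\F[x_1,\ldots,x_{n-1}]$:
\[
f(x_1,\ldots,x_n) \;=\; \sum_{i=0}^{d} f_i(x_1,\ldots,x_{n-1})\cdot x_n^{i},
\]
where each $f_i$ has individual degree at most $d$ in $x_1,\ldots,x_{n-1}$ (since $f$ did). Because $f$ is nonzero, at least one coefficient $f_{i^\ast}$ is a nonzero polynomial in $n-1$ variables. By the inductive hypothesis applied to $f_{i^\ast}$, there is a point $\veca' \in S^{n-1}$ with $f_{i^\ast}(\veca') \neq 0$. Substituting this point into $f$ yields the univariate polynomial $g(x_n) := f(\veca', x_n) = \sum_{i=0}^{d} f_i(\veca') \cdot x_n^{i}$, which is nonzero (its $x_n^{i^\ast}$ coefficient is $f_{i^\ast}(\veca') \neq 0$) and has degree at most $d$. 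Applying the base case to $g$ produces some $a_n \in S$ with $g(a_n) = f(\veca', a_n) \neq 0$, so $\veca := (\veca', a_n) \in S^n$ is the desired point.

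The argument has no real obstacle; the only subtle point is noticing that passing to the coefficient $f_{i^\ast}$ preserves the individual-degree bound (which is immediate, since a bound on individual degree in all $n$ variables implies the same bound on individual degree in any subset of them), so the induction hypothesis applies. This is precisely why the individual-degree version, unlike the total-degree Ore-DeMillo-Lipton-Schwartz-Zippel statement, only requires $|S| > d$ rather than $|S| > d$ scaling with $n$.
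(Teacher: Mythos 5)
Your proof is correct and is essentially the same one-variable-at-a-time argument as the paper's, merely cast as a formal induction (fixing $x_1,\ldots,x_{n-1}$ first and passing to a nonzero coefficient $f_{i^\ast}$) rather than the paper's iterative phrasing that views $f$ as univariate in $x_1$ over $\F(x_2,\ldots,x_n)$ and then repeats.
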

\begin{proof}
  Let $S\subseteq \F$ with $|S| \geq d+1$. We may interpret $f$ as a univariate polynomial in $x_1$ of degree at most $d$ with coefficients from $\F\inparen{x_2,\ldots, x_n}$. Since a nonzero univariate polynomial of degree $d$ can have at most $d$ roots, there must be some $a_1 \in S$ such that $f(a_1,x_2,\ldots, x_n) \neq 0$. Repeating this argument for variables $x_2,\ldots, x_n$, the lemma follows. 
\end{proof}

\noindent
A more general statement is the following Combinatorial Nullstellensatz due to Alon~\cite{Alon99}.

\begin{theorem}[Combinatorial Nullstellensatz \cite{Alon99}]\label{thm:CNS}
Let $f(x_1,\ldots, x_n)$ be a nonzero polynomial with $\deg(f) \leq t_1 + \cdots + t_n$, and suppose the coefficient of $x_1^{t_1}\cdots x_n^{t_n}$ is nonzero in $f$. If $S_1,\ldots, S_n \subseteq \F$ with $|S_i| \geq (t_i+1)$, then there is a point $\veca = (a_1,\ldots, a_n) \in S_1\times \cdots \times S_n$ such that $f(\veca) \neq 0$. 
\end{theorem}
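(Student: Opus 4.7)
The plan is to prove the Combinatorial Nullstellensatz by induction on the number of variables $n$, mirroring the one-variable-at-a-time argument already used in \autoref{lem:Comb-Null}. The essential upgrade over that lemma is that we can no longer afford to peel off variables in an arbitrary order; the hypothesis on the coefficient of $x_1^{t_1}\cdots x_n^{t_n}$ is what lets us choose the right residual polynomial at each step.

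For the base case $n=1$, the assumption says that $f$ is a nonzero univariate polynomial with $\deg(f) \le t_1$ and in particular the coefficient of $x_1^{t_1}$ is nonzero, so $\deg(f) = t_1$. Since $f$ has at most $t_1$ roots and $|S_1| \ge t_1+1$, some $a_1 \in S_1$ avoids all of them.

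For the inductive step, I would write
\[
  f(x_1,\ldots,x_n) \;=\; \sum_{i=0}^{t_1} x_1^{i}\, g_i(x_2,\ldots,x_n),
\]
where each $g_i \in \F[x_2,\ldots,x_n]$. The key observation is that the coefficient of the monomial $x_2^{t_2}\cdots x_n^{t_n}$ in $g_{t_1}$ equals the coefficient of $x_1^{t_1}\cdots x_n^{t_n}$ in $f$, which is nonzero by hypothesis; hence $g_{t_1}$ is a nonzero polynomial. Moreover, every monomial contributing to $g_{t_1}$ comes from a monomial of $f$ that already carries $x_1^{t_1}$, so its $x_2,\ldots,x_n$-degree is at most $\deg(f) - t_1 \le t_2 + \cdots + t_n$. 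Thus $g_{t_1}$ satisfies the hypotheses of the theorem with variables $x_2,\ldots,x_n$ and the degrees $t_2,\ldots,t_n$, and by the inductive hypothesis there is a point $(a_2,\ldots,a_n) \in S_2\times\cdots\times S_n$ with $g_{t_1}(a_2,\ldots,a_n) \neq 0$.

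Once such $(a_2,\ldots,a_n)$ is fixed, the univariate polynomial $h(x_1) := f(x_1,a_2,\ldots,a_n)$ has leading coefficient $g_{t_1}(a_2,\ldots,a_n) \neq 0$, so $\deg(h) = t_1$. Since $|S_1| \ge t_1+1$, there is some $a_1 \in S_1$ with $h(a_1) \neq 0$, giving the desired $\veca = (a_1,\ldots,a_n)$. I expect the only subtle point to be verifying that the chosen residual polynomial $g_{t_1}$ still satisfies the degree condition $\deg(g_{t_1}) \le t_2+\cdots+t_n$ needed to invoke induction; this is exactly why the hypothesis is stated as $\deg(f)\le t_1+\cdots+t_n$ rather than as a bound on individual degrees. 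Apart from that bookkeeping, the argument is routine.
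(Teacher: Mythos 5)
The paper does not prove \autoref{thm:CNS}; it simply cites Alon. So the question is only whether your argument stands on its own, and unfortunately it does not: there is a genuine gap in the inductive step.

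You write $f = \sum_{i=0}^{t_1} x_1^i g_i(x_2,\ldots,x_n)$ and then assert that after choosing $(a_2,\ldots,a_n)$, the univariate $h(x_1) = f(x_1,a_2,\ldots,a_n)$ has leading coefficient $g_{t_1}(a_2,\ldots,a_n)$ and hence degree exactly $t_1$. This implicitly assumes $\deg_{x_1}(f) \le t_1$, but the hypothesis only bounds the \emph{total} degree $\deg(f) \le t_1 + \cdots + t_n$, not the individual degree in $x_1$. When $\deg_{x_1}(f) > t_1$, the decomposition has more terms, $h$ can have degree larger than $t_1$, and $|S_1| \ge t_1 + 1$ no longer suffices to dodge all its roots. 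A concrete failure: take $n=2$, $t_1=t_2=1$, $f(x_1,x_2) = x_1^2 - x_1 + x_1 x_2$, and $S_1 = S_2 = \{0,1\}$. The hypotheses hold (total degree $2$, coefficient of $x_1x_2$ is $1$), and CN is true here since $f(1,1)=1$. But in your scheme $g_1 = x_2 - 1$, the inner induction forces $a_2 = 0$ (the only point of $S_2$ where $g_1 \ne 0$), and then $h(x_1) = x_1^2 - x_1$ vanishes on all of $S_1$. So the recipe does not produce a witness.

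The fix that makes an induction of this flavor go through is to first reduce $f$ modulo the univariate polynomials $\prod_{s\in S_i}(x_i - s)$, obtaining a polynomial $\tilde f$ that agrees with $f$ on $S_1\times\cdots\times S_n$ and has $\deg_{x_i}(\tilde f) \le t_i$ for every $i$. One then has to check that this reduction does not destroy the coefficient of $x_1^{t_1}\cdots x_n^{t_n}$; this is precisely where the total-degree hypothesis $\deg(f)\le t_1+\cdots+t_n$ enters, since the reduction substitutes $x_i^{|S_i|}$ by something of strictly smaller degree and therefore cannot create new contributions to a monomial already of maximal total degree. After that preprocessing, the one-variable-at-a-time peeling (or \autoref{lem:Comb-Null} directly, once you locate a nonvanishing point of the leading coefficient) works as you intended. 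Alternatively, Michałek's short proof of CN does an induction on $\sum t_i$ using the division $f = (x_1-a)g + f(a,x_2,\ldots,x_n)$ and avoids the individual-degree issue entirely. As written, though, your argument is incorrect.
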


It is also known that existence of non-trivial hitting sets for a class $\mathcal{C}$ can be used to construct hard polynomials. We state the version that corresponds to the individual degree as opposed to total degree. 

\begin{theorem}[Heintz and Schnorr~\cite{HS80},
  Agrawal~\cite{A05a}]\label{thm:HS} Let $H(n, \text{i-deg: }d, s)$ be an explicit hitting set for the class $\mathcal{C}(n,\text{i-deg: }d, s)$.
Then, for every $k\leq n$ and $d'$ such that $d'\leq d$ and $(d'+1)^{k} > \abs{H(n, \text{i-deg: }d, s)}$, there is a nonzero polynomial on $k$ variables and individual degree $d'$ that vanishes on the hitting set $H(n,\text{i-deg: }d, s)$, and hence cannot be computed by a circuit of size $s$.
\end{theorem}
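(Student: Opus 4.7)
The plan is a short dimension-counting argument in the spirit of the classical Heintz--Schnorr construction. The hypothesis bounds the cardinality of the hitting set $H := H(n,\text{i-deg: }d,s)$ strictly below $(d'+1)^k$, which is exactly the dimension of the space of $k$-variate polynomials of individual degree at most $d'$. So we exhibit a nonzero polynomial in that space that vanishes on $H$ (viewed with a suitable embedding), and then argue it must be hard.

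First, I would let $V \subseteq \F[x_1,\ldots,x_k]$ denote the $\F$-vector space of $k$-variate polynomials of individual degree at most $d'$, so that $\dim_\F V = (d'+1)^k$ (a basis is given by the monomials $x_1^{a_1}\cdots x_k^{a_k}$ with $0 \le a_i \le d'$). Next, fix the embedding of $\F^n$ onto its first $k$ coordinates, and for each point $\veca = (a_1,\ldots,a_n) \in H$ consider the evaluation functional $\mathrm{ev}_{\veca}: V \to \F$ sending $P(x_1,\ldots,x_k) \mapsto P(a_1,\ldots,a_k)$. Bundling these together yields the linear map
\[
  \Phi: V \longrightarrow \F^{|H|}, \qquad \Phi(P) = (\mathrm{ev}_{\veca}(P))_{\veca \in H}.
\]
Since $\dim_\F V = (d'+1)^k > |H| = \dim_\F \F^{|H|}$ by hypothesis, the map $\Phi$ has nontrivial kernel. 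Pick any nonzero $P \in \ker \Phi$; this $P$ is a nonzero $k$-variate polynomial of individual degree at most $d'$.

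Next, I would view $P$ as an element of $\F[x_1,\ldots,x_n]$ by letting it depend only on the first $k$ variables. Its individual degree, taken over all $n$ variables, is still at most $d' \le d$. By construction, $P(\veca) = 0$ for every $\veca \in H$.

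Finally comes the use of the hitting-set property. Suppose for contradiction that $P$ were computable by an algebraic circuit of size at most $s$. Then $P$ would lie in $\mathcal{C}(n, \text{i-deg: }d, s)$, and because $H$ is a hitting set for this class and $P \neq 0$, there would have to exist some $\veca \in H$ with $P(\veca) \neq 0$, contradicting the vanishing established above. Therefore $P$ has no circuit of size $\leq s$, completing the proof. There is no real obstacle here; the only mild point of care is the embedding step, where one must verify that regarding a $k$-variate polynomial of individual degree $\leq d'$ as an $n$-variate polynomial preserves the individual-degree bound (which is immediate, since the variables $x_{k+1},\ldots,x_n$ appear with degree $0$).
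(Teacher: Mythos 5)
Your proof is correct and is the standard Heintz--Schnorr dimension-counting argument; the paper itself states this result without proof, citing \cite{HS80,A05a}, so there is no internal proof to compare against. Your handling of the $k \le n$ embedding by projecting $H$ onto the first $k$ coordinates and then regarding the resulting $k$-variate kernel element as an $n$-variate polynomial of individual degree at most $d' \le d$ is precisely the right way to make the vanishing and hardness conclusions line up with the class $\mathcal{C}(n,\text{i-deg: }d,s)$.
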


Finally, we need the following notion of \emph{interpolating sets} for a class of polynomials. 

\begin{definition}[Interpolating sets for $\mathcal{P}(k,d)$] Let $M_{k,d}$ denote the number of $k$-variate monomials of degree at most $d$. That is,
  $M_{k,d} = \binom{k+d}{d}$. 

  A set of points $\veca_1,\ldots, \veca_r \in \F^{k}$ is said to be an \emph{interpolating set for $\mathcal{P}(k,d)$} if the vectors
  \[
    \setdef{\inparen{\veca_i^{\vece}\;:\; \vece \in \Z_{\geq 0}^k\;,\; \abs{\vece}\leq d}}{i\in [r]} \subset \F^{M_{k,d}}
  \]
  form a spanning set for $\F^{M_{k,d}}$.  

  \noindent
  Equivalently, for every $\vece \in \Z_{\geq 0}^k$ with $\abs{\vece}\leq d$, there exists field constants $\beta_1,\ldots, \beta_{r}$ such that for every $f(\vecz) \in \mathcal{P}(k,d)$, we have
  \[
    \coeff_{\vecz^{\vece}}(f) = \sum_{i=1}^r \beta_i \cdot f(\veca_i).\qedhere
  \]
\end{definition}

A canonical example of an interpolating set for $\mathcal{P}(k,d)$ is $S^k = \setdef{(s_1,\ldots, s_k)}{s_i \in S\;\forall i}$ where  $S \subseteq \F$ is a set of at least $(d+1)$ distinct field elements. The following well-known proposition says that we can find interpolating sets inside a suitably large grid, even in the presence of additional polynomial constraints. 

\begin{proposition}[Interpolating sets with additional polynomial constraints]\label{prop:interpolating}
  Let $d,k$ be arbitrary positive integers, and any polynomial $Q(\vecx)$ of degree $D$. If $S \subseteq \F$ with $|S| \geq (D+d+1)$ there is a set of points $\veca_1,\ldots, \veca_{M_{k,d}} \in S^k$ such that $Q(\veca_i) \neq 0$ for all $i \in [M_{k,d}]$ and $\set{\veca_1,\ldots, \veca_{M_{k,d}}}$ is an interpolating set for $\mathcal{P}(k,d)$. 
\end{proposition}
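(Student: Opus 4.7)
The plan is to package the two constraints---interpolating and avoiding the zero set of $Q$---into a single auxiliary polynomial and then invoke \autoref{thm:CNS} (or more directly \autoref{lem:Comb-Null}) on a grid $S^k \times \cdots \times S^k$. Concretely, let $M = M_{k,d}$ and fix an ordering $\vece_1, \ldots, \vece_M$ of the exponent vectors $\vece \in \Z_{\geq 0}^k$ with $\abs{\vece}\le d$. Introduce $M$ fresh blocks of $k$ variables each, $\vecx_1,\ldots, \vecx_M$, and consider
\[
F(\vecx_1,\ldots, \vecx_M) \;:=\; \det\!\bigl(\vecx_i^{\vece_j}\bigr)_{i,j \in [M]} \;\cdot\; \prod_{i=1}^{M} Q(\vecx_i).
\]
A nonvanishing evaluation $(\veca_1,\ldots, \veca_M) \in (S^k)^M$ of $F$ produces exactly the desired points: the determinant factor being nonzero is equivalent to $\set{\veca_1,\ldots, \veca_M}$ being an interpolating set for $\mathcal{P}(k,d)$, and each $Q(\veca_i) \neq 0$ follows from the product factor.

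The two things I need to check are that $F$ is nonzero as a formal polynomial and that its individual degree in each variable is small enough that the hypothesis $|S| \ge D + d + 1$ is sufficient. For the latter, fix any coordinate $x_{i,j}$. In the determinant expansion, each monomial is of the form $\prod_i \vecx_i^{\vece_{\sigma(i)}}$, so the individual degree of $x_{i,j}$ is at most $\abs{\vece_{\sigma(i)}} \le d$; in $\prod_{i'} Q(\vecx_{i'})$ only the factor $Q(\vecx_i)$ involves $x_{i,j}$ and contributes individual degree at most $D$. Hence the individual degree of $F$ in each variable is at most $D+d$, matching the size requirement on $S$ exactly. For nonvanishing, the product $\prod_i Q(\vecx_i)$ is clearly nonzero since $Q$ is. The determinant is nonzero as a polynomial because any grid $T^k$ with $\abs{T} \ge d+1$ is a hitting set for $\mathcal{P}(k,d)$ by \autoref{lem:Comb-Null}, so the evaluation map $\mathcal{P}(k,d) \to \F^{T^k}$ is injective; being an injection of linear spaces of the same rank $M_{k,d}$ onto a subspace of $\F^{|T^k|}$, it follows from basic linear algebra that $M$ of these evaluation functionals are linearly independent, i.e.\ some specialization of $\det(\vecx_i^{\vece_j})$ is nonzero.

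With $F$ nonzero and of individual degree at most $D+d$, applying \autoref{lem:Comb-Null} with the set $S$ (of size at least $D+d+1$) to the polynomial $F$ in its $Mk$ variables produces a point $(\veca_1,\ldots, \veca_M)\in (S^k)^M$ with $F(\veca_1,\ldots, \veca_M) \neq 0$, which finishes the proof.

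The only step that requires a genuine argument (as opposed to bookkeeping) is the nonvanishing of the monomial-evaluation determinant; the rest is a clean individual-degree count followed by Combinatorial Nullstellensatz. I do not expect any real obstacle here, as the hypothesis $\abs{S} \ge D+d+1$ is tailored precisely to the individual-degree bound of $F$.
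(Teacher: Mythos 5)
Your proof is correct and follows essentially the same approach as the paper: form the auxiliary polynomial $\det\bigl(\vecx_i^{\vece_j}\bigr)\cdot\prod_i Q(\vecx_i)$, observe that its individual degree in every coordinate is at most $D+d$, and then hit it with the grid $S^{kM_{k,d}}$ via \autoref{lem:Comb-Null} (or \autoref{thm:CNS}). The only place you are more careful than the paper is in justifying that the determinant factor is a nonzero polynomial --- the paper leaves that implicit, whereas your argument via injectivity of the evaluation map $\mathcal{P}(k,d)\to\F^{T^k}$ on a $(d+1)^k$-grid supplies the missing (and correct) reason.
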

\begin{proof}
Consider the following $M_{k,d} \times M_{k,d}$ matrix $\Lambda$: 
  \begin{quote}
    The rows of $\Lambda$ are indexed by $[M_{k,d}]$ and the columns indexed by monomials $\vecx^{\vece}$ of degree at most $d$, and the entry at $(i,\vecx^{\vece})$ is $\veca_i^{\vece}$ (the monomial $\vecx^\vece$ evaluated at $\veca_i$). 
  \end{quote}
  The condition that $\set{\veca_1,\ldots, \veca_{M_{k,d}}}$ is an interpolating set can be characterised by the nonzeroness of the determinant of $\Lambda$. Hence, the condition that $\veca_1,\ldots, \veca_{M_{k,d}}$ forms an interpolating set and $Q(\veca_i) \neq 0$ for all $i$ can be charcterised by the nonzeroness of the polynomial
  \[
   \Gamma(\veca_1,\ldots, \veca_{M_{k,d}}) := \Lambda(\veca_1,\ldots, \veca_{M_{k,d}}) \cdot \prod_{i = 1}^{M_{k,d}} Q(\veca_i). 
  \]
    Let $\setdef{a_{i,j}}{i\in [M_{k,d}] \;,\; j \in [k]}$ be indeterminates, corresponding to the coordinates of the points  $\veca_1,\ldots, \veca_{M_{k,d}}$. Although $\Gamma$ is a $k\cdot M_{k,d}$-variate polynomial of total degree $O((d+D)M_{k,d})$,  the individual degree is at most $(D + d)$ (since each row of $\Gamma$ is indexed by a disjoint set of variables). Thus, by \autoref{thm:CNS}, there are points $\veca_1,\ldots, \veca_{M_{k,d}} \in [D+d+1]^k$ that forms an interpolating set for $\mathcal{P}(k,d)$ and also keeps $Q$ nonzero. 
\end{proof}

\subsection{Homogenisation} \label{sec:homogenisation}

\begin{definition}[Homogeneous circuits] \label{defn:hom-circuits}
  A circuit $C$ is said to be \emph{homogeneous} if every gate of the circuit computes a homogeneous polynomial.

  For a non-homogeneous polynomial $f$, we shall say that a \emph{homogeneous (multi-output) circuit} $C$ computes $f$ if the outputs of $C$ are the homogeneous parts of $f$. 
\end{definition}

\begin{lemmawp}[Strassen's homogenisation]
  Let $C$ be a circuit of size $s$ computing a homogeneous polynomial of degree $d$. Then, there is a \emph{homogeneous circuit} $C'$ of size at most $O(sd^2)$ computing the same polynomial.
\end{lemmawp}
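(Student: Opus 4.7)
The plan is to carry out the classical gate-by-gate replacement due to Strassen. For every gate $g$ of $C$, I would introduce $d+1$ new gates $g^{(0)}, g^{(1)}, \ldots, g^{(d)}$ in the new circuit $C'$, where the invariant is that $g^{(i)}$ computes the degree-$i$ homogeneous component of the polynomial computed at $g$ in $C$. Once this invariant is established, the output of $C'$ will be the gate $g_{\mathrm{out}}^{(d)}$, since $C$ computes a homogeneous polynomial of degree exactly $d$.

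The construction of the $g^{(i)}$'s proceeds by induction on the depth of $g$ in $C$. For a leaf labelled by a field constant $\alpha$, I set $g^{(0)} = \alpha$ and $g^{(i)} = 0$ for $i \geq 1$. For a leaf labelled by a variable $x_j$, I set $g^{(1)} = x_j$ and $g^{(i)} = 0$ for $i \neq 1$. For an addition gate $g = g_1 + g_2$, I simply set $g^{(i)} = g_1^{(i)} + g_2^{(i)}$. For a multiplication gate $g = g_1 \cdot g_2$, I set
\[
g^{(i)} \;=\; \sum_{j=0}^{i} g_1^{(j)} \cdot g_2^{(i-j)},
\]
which correctly extracts the degree-$i$ part of a product, and respects the homogeneity invariant since each summand is a product of two homogeneous polynomials. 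Scalar multipliers on edges are pushed through in the obvious way and do not affect degrees.

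It then remains to account for the size. Each addition gate of $C$ contributes $d+1$ new addition gates to $C'$, i.e.\ $O(d)$ gates. Each multiplication gate of $C$ contributes, for each $i \in \{0, 1, \ldots, d\}$, a total of $i+1$ products and $i$ additions in the formula above, so $O(d)$ new gates per level and $O(d^2)$ new gates overall. Summing over all $s$ gates of $C$, the new circuit $C'$ has size at most $O(sd^2)$. Finally, I would note that, by construction, every gate of $C'$ computes a homogeneous polynomial (the zero polynomial being considered homogeneous of every degree for this purpose, or trivially eliminated in a cleanup pass), so $C'$ is a homogeneous circuit. There is no substantial obstacle here; the only thing to be careful about is the bookkeeping of the size bound and the handling of the zero gates, which can be removed at the end without affecting correctness or the $O(sd^2)$ bound.
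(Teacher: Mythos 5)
The paper states this as a \texttt{lemmawp} and gives no proof of its own, but the argument you outline is precisely the standard Strassen homogenisation, and it coincides with the transformation the paper does spell out for the closely related \emph{partial} homogenisation lemma (replacing each gate $g$ by copies $g_0,\ldots,g_d$, with $g_a = h^{(L)}_a + h^{(R)}_a$ at addition gates and $g_a = \sum_{b=0}^a h^{(L)}_b\, h^{(R)}_{a-b}$ at multiplication gates). Your proposal is correct and takes essentially the same approach the paper relies on.
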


\begin{lemma}[Partial homogenisation]\label{lem:partial-hom}
  Let $C$ be a multi-output homogeneous circuit of size $s$, with $m$ outputs computing homogeneous polynomials $f_1,\ldots, f_m$. Suppose $C'$ is a multi-output, $m$-input circuit of size $s'$. Then, there is a \emph{homogeneous circuit} $D$ of size at most $s + O(s' \cdot d^2)$ computing $C' \circ C = C'(f_1,\ldots, f_m)$, where $d$ is the maximum degree of the outputs of $C'(f_1,\ldots, f_m)$. 
\end{lemma}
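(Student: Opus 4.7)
The plan is to apply the usual Strassen-style homogenisation \emph{only} to $C'$, treating the $m$ input wires of $C'$ as external symbols $y_1,\ldots, y_m$ that will be substituted by the homogeneous polynomials $f_1,\ldots, f_m$ coming from the outputs of $C$. Let $d_i := \deg f_i$, so each $y_i$ carries a known homogeneous weight $d_i$. Since we are promised that the substituted outputs $C'(f_1,\ldots, f_m)$ have degree at most $d$, it suffices to track homogeneous pieces in degrees $0,1,\ldots, d$ at every gate of $C'$.

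For every gate $v$ in $C'$ and every $0 \leq j \leq d$, introduce a gate $v^{(j)}$ in the new circuit $D$, meant to compute the degree-$j$ homogeneous component (in the original variables of $C$) of the polynomial at $v$ after the substitution $y_i \mapsto f_i$. The construction is recursive on the structure of $C'$:
\begin{itemize}
\item If $v$ is a leaf labelled by the field constant $\alpha$, set $v^{(0)} := \alpha$ and treat $v^{(j)} := 0$ for $j \geq 1$ (no gate is actually created for these).
\item If $v$ is the input leaf $y_i$, set $v^{(d_i)}$ to be the corresponding $i$-th output of $C$ and $v^{(j)} := 0$ for $j \neq d_i$.
\item If $v = u + w$, add gates $v^{(j)} := u^{(j)} + w^{(j)}$ for each $j \leq d$.
\item If $v = u \cdot w$, add gates $v^{(j)} := \sum_{i=0}^{j} u^{(i)} \cdot w^{(j-i)}$ for each $j \leq d$.
\end{itemize}
A straightforward induction on gate depth in $C'$ shows that $v^{(j)}$ is a homogeneous polynomial of degree $j$ (in the original variables) and equals the degree-$j$ homogeneous component of $h_v(f_1,\ldots, f_m)$, where $h_v$ is the polynomial computed at $v$ in $C'$. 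Observe that truncating at degree $d$ is harmless for the final outputs: the degree-$j$ component of a sum depends only on the degree-$j$ components of the summands, and the degree-$j$ component of a product depends only on components of degrees $\leq j$; hence any hypothetical components of degree $> d$ at internal gates cannot influence components of degree $\leq d$ at outputs. For an output wire $o$ of $C'$, we recover $o = \sum_{j=0}^{d} o^{(j)}$; if desired, we may expose the individual homogeneous parts directly (matching the multi-output homogeneous convention of \autoref{defn:hom-circuits}).

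For the size bound, the entire circuit $C$ is reused verbatim (contributing $s$ gates), and every gate of $C'$ contributes at most $O(d)$ new gates in the addition case and at most $O(d^2)$ new gates in the multiplication case, for a total of $O(s' d^2)$ additional gates. Hence the final homogeneous circuit $D$ has size $s + O(s' d^2)$, as claimed. The only delicate point, which is the main thing to be careful about, is to verify that the truncation at degree $d$ preserves correctness at the outputs despite the fact that intermediate gates of $C'$ could in principle compute polynomials of much larger degree after substitution; this is settled by the observation above that $+$ and $\times$ respect the filtration by degree.
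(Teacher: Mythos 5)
Your proposal is correct and follows essentially the same approach as the paper: apply standard Strassen homogenisation only to $C'$, treating the input leaf $y_i$ as carrying homogeneous degree $\deg(f_i)$, indexing gate copies by degree up to $d$, and wiring the $+$ and $\times$ cases in the usual way. The extra paragraph you include about truncation at degree $d$ being harmless is a valid (if routine) justification that the paper leaves implicit.
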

\begin{proof}
  We follow the standard homogenisation procedure, but applied only to the circuit $C'$ --- we replace each gate $g\in C'$ by copies $g_0,\ldots, g_d$ and add the following connections:
  \[
    g_a = \begin{cases}
      h^{(L)}_a + h^{(R)}_a & \text{if }g = h^{(L)} + h^{(R)}\\
      \sum_{b=0}^ah^{(L)}_b\times h^{(R)}_{a-b} & \text{if }g = h^{(L)} \times h^{(R)}
    \end{cases}
  \]
  and any leaf $\ell$ of $C'$ labeled with the $i$-th variable is renamed as $\ell_{\deg(f_i)}$. Since each of the $f_i$ is a homogeneous polynomial, it is immediate to see that the above transformation yields a homogeneous circuit for $C' \circ C$ of size at most $s + O(s' d^2)$. 
\end{proof}

\subsection{The Generator}

For a $k$-variate polynomial $P$, let $\Del_i(P)(\vecz,\vecy) \in \F[\vecz,\vecy]$ be defined as
\[
  \Del_i(P) = \sum_{\vece: \abs{\vece} = n} \pfrac{\vecy^\vece}{\vece!}\cdot \partial_{\vecz^{\vece}}(P)
\]
where $\vece! = e_1! \cdots e_k!$. The generator with respect to $P$ is defined as follows:
\[
  \g_P(\vecz,\vecy) = \inparen{\Del_0(P)(\vecz,\vecy),\ldots, \Del_n(P)(\vecz,\vecy)}. 
\]
The following are some simple observations about the operator $\Del$. 

\begin{observation}\label{obs:Del-under-shift}
  For any $\veca \in \F^k$, if $P'(\vecz) = P(\vecz + \veca)$, then for all $i$ we have \(\Del_i(P')(\vecz,\vecy) = \Del_i(P)(\vecz + \veca,\vecy).\)
  Hence, $\g_{P'}(\vecz,\vecy) = \g_P(\vecz + \veca,\vecy)$. 
\end{observation}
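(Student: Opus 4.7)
The plan is to use the characterization of $\Delta_i(P)(\vecz,\vecy)$ as the degree-$i$ (in $\vecy$) homogeneous component of the Taylor expansion $P(\vecz+\vecy)$, which is exactly the content of the definition preceding this observation. Under this view the statement is essentially a tautology: both sides of the claimed equality are the degree-$i$ homogeneous components in $\vecy$ of the \emph{same} bivariate expression, only reached via two different groupings.

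Concretely, I would first compute
\[
  P'(\vecz+\vecy) \;=\; P(\vecz+\vecy+\veca) \;=\; P\bigl((\vecz+\veca)+\vecy\bigr).
\]
On the one hand this is $\sum_i \Delta_i(P')(\vecz,\vecy)$ by definition applied to $P'$; on the other hand, writing $\vecw = \vecz+\veca$ and Taylor-expanding in $\vecy$, this equals $\sum_i \Delta_i(P)(\vecw,\vecy) = \sum_i \Delta_i(P)(\vecz+\veca,\vecy)$. Since each summand is homogeneous of degree $i$ in $\vecy$ and this decomposition is unique, matching degree-$i$ parts gives the desired identity $\Delta_i(P')(\vecz,\vecy)=\Delta_i(P)(\vecz+\veca,\vecy)$.

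Alternatively, if one prefers to work directly from the explicit formula, the proof reduces to the elementary chain rule under translation: for any multi-index $\vece$,
\[
  \partial_{\vecz^{\vece}}\bigl(P(\vecz+\veca)\bigr) \;=\; (\partial_{\vecz^{\vece}} P)(\vecz+\veca),
\]
and substituting this into the definition $\Delta_i(P')(\vecz,\vecy)=\sum_{|\vece|=i}\frac{\vecy^{\vece}}{\vece!}\,\partial_{\vecz^{\vece}}(P')$ produces exactly $\Delta_i(P)(\vecz+\veca,\vecy)$.

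The second assertion $\g_{P'}(\vecz,\vecy)=\g_P(\vecz+\veca,\vecy)$ is immediate by componentwise application of the first part for $i=0,1,\ldots,n$. I do not anticipate any obstacle: this is pure bookkeeping, serving to record the translation-equivariance of the $\Delta_i$ operators, which will presumably be used later to reduce to a convenient choice of base point when reconstructing $P$.
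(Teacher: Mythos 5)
Your proof is correct. The paper states this observation without a proof, leaving it as an immediate consequence of the definition, and your argument (either via the Taylor-expansion characterization of $\Del_i$ applied to $P'(\vecz+\vecy)=P((\vecz+\veca)+\vecy)$, or directly via the translation-invariance of partial derivatives, $\partial_{\vecz^{\vece}}\bigl(P(\vecz+\veca)\bigr)=(\partial_{\vecz^{\vece}}P)(\vecz+\veca)$) is exactly the short computation the authors intend the reader to supply.
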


\begin{observation}\label{obs:Del-under-modzi}
  Let $P(\vecz)$ and $Q(\vecz)$ be polynomials such that $P = Q \bmod{\inangle{\vecz}^{j}}$. Then, for any $i \leq j$, we have $\Del_i(P) = \Del_i(Q) \bmod{\inangle{\vecz}^{j-i}}$. 
\end{observation}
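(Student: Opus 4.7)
The plan is to use linearity of $\Del_i$ and reduce the statement to showing that $\partial_{\vecz^{\vece}}$ maps $\inangle{\vecz}^{j}$ into $\inangle{\vecz}^{j-\abs{\vece}}$ for each exponent vector $\vece$.

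First I would write $P = Q + R$ where, by hypothesis, $R \in \inangle{\vecz}^{j}$. Since $\Del_i$ is an $\F[\vecy]$-linear operator on $\F[\vecz,\vecy]$, we have
\[
  \Del_i(P) - \Del_i(Q) \;=\; \Del_i(R) \;=\; \sum_{\abs{\vece} = i} \frac{\vecy^{\vece}}{\vece!} \, \partial_{\vecz^{\vece}}(R).
\]
Since $\inangle{\vecz}$ denotes the ideal generated by $z_1,\ldots,z_k$ in $\F[\vecz,\vecy]$ (the coefficients in $\vecy$ play no role in this containment), it suffices to show that each $\partial_{\vecz^{\vece}}(R)$ lies in $\inangle{\vecz}^{j-i}$ whenever $\abs{\vece} = i \leq j$.

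For this, expand $R$ as a $\F[\vecy]$-linear combination of monomials $\vecz^{\vecm}$ with $\abs{\vecm} \geq j$, and use linearity of $\partial_{\vecz^{\vece}}$ once more. For any such monomial, $\partial_{\vecz^{\vece}}(\vecz^{\vecm})$ is either zero (when some coordinate of $\vece$ exceeds the corresponding coordinate of $\vecm$) or equals $\frac{\vecm!}{(\vecm - \vece)!}\cdot \vecz^{\vecm - \vece}$, a monomial of degree $\abs{\vecm} - \abs{\vece} \geq j - i$. In both cases the result lies in $\inangle{\vecz}^{j-i}$, so $\partial_{\vecz^{\vece}}(R) \in \inangle{\vecz}^{j-i}$ and hence $\Del_i(R) \in \inangle{\vecz}^{j-i}$, as desired.

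There is no real obstacle here: the observation is essentially the statement that differentiating $i$ times drops the $\vecz$-order of vanishing by at most $i$, which is immediate on monomials. The only thing to be slightly careful about is that we are working in $\F[\vecz,\vecy]$ and the ideal $\inangle{\vecz}$ is taken in this larger ring, but since $\Del_i$ introduces only $\vecy$-factors (which are units modulo $\inangle{\vecz}^{j-i}$ in the sense that they do not affect membership in a $\vecz$-generated ideal), linearity handles this cleanly.
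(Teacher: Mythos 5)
Your proof is correct and is essentially the natural argument; the paper in fact states this observation without proof, treating it as immediate, and your reconstruction is exactly the intended reasoning: linearity of $\Del_i$ reduces the claim to $R = P - Q \in \inangle{\vecz}^j$, and termwise differentiation of monomials shows $\partial_{\vecz^{\vece}}(R) \in \inangle{\vecz}^{j-\abs{\vece}}$, whence multiplying by $\vecy^{\vece}/\vece!$ keeps the result in the ideal $\inangle{\vecz}^{j-i}$ of $\F[\vecz,\vecy]$. Two small stylistic nits: since $P,Q \in \F[\vecz]$, the difference $R$ is an $\F$-linear (not $\F[\vecy]$-linear) combination of $\vecz$-monomials; and the closing remark about $\vecy$-factors being ``units'' is an awkward gloss --- the relevant fact is just that $\inangle{\vecz}^{j-i}$ is an ideal of $\F[\vecz,\vecy]$, so it absorbs multiplication by arbitrary elements, $\vecy^{\vece}$ included.
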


\section{The Main Theorem}
\label{sec:main-theorem}

We start by recalling the theorem that states that our generator is indeed a hitting-set generator. 

\mainthm*

\noindent
The rest of this section would be devoted to the proof of this theorem. \\

Let us assume the contrary. That is, there is a circuit $C(\vecx)$ of size $s$ and degree $D$ such that $C \neq 0$ but $C \circ \g_P(\vecz,\vecy) = 0$. We shall assume, without loss of generality, that $C$ depends non-trivially on the variable $x_n$ and that no circuit $C'(\vecx)$ of size $s$ and degree $D$  with $C'$ depending on fewer variables satisfy $C' \neq 0$ and $C' \circ \g_P(\vecz,\vecy) = 0$.

The proof will proceed by inductively building a circuit that computes each homogeneous part of $P$ but we would need the following preprocessing step.

\paragraph*{Preprocessing the circuit:}

Let $C(x_0,\ldots, x_n)$ be the minimal (in terms of number of variables) size $s$ circuit that is \emph{not} fooled by $\g_P(\vecz,\vecy)$.
That is, $C \circ \g_P(\vecz,\vecy) = 0$.

\begin{claim}
There is some $i\geq 0$ such that
\begin{align*}
  \partial_{x_n^i}(C) \circ \g_P(\vecz,\vecy) &= 0,\\
  \partial_{x_n^{i+1}}(C) \circ \g_P(\vecz,\vecy) &\neq 0.
\end{align*}
\end{claim}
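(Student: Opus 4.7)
The plan is to read off the claim from the multiplicity of $g_n$ as a root of the univariate polynomial
\[
  \tilde{C}(x_n) := C(g_0,\ldots,g_{n-1},x_n) \in \F(\vecz,\vecy)[x_n],
\]
where $\g_P = (g_0,\ldots,g_n)$. The key observation is that, for every $j \geq 0$, evaluating the $j$-th $x_n$-derivative of $\tilde{C}$ at $x_n = g_n$ is the same as $(\partial_{x_n^j} C) \circ \g_P$. Hence the claim reduces to: $g_n$ is a root of $\tilde{C}$ of some finite multiplicity $m \geq 1$, and in characteristic zero the first $m$ derivatives vanish at a root of multiplicity $m$ while the $m$-th does not.

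First I would verify that $\tilde{C}$ is a \emph{nonzero} element of $\F(\vecz,\vecy)[x_n]$. Suppose for contradiction that $\tilde{C} \equiv 0$. Then for every $a \in \F$, substituting $x_n = a$ gives $C(g_0,\ldots,g_{n-1},a) = 0$, i.e.\ the circuit $\hat{C}_a(x_0,\ldots,x_{n-1}) := C(x_0,\ldots,x_{n-1},a)$ is fooled by $\g_P$. Since $C$ is a nonzero polynomial that depends non-trivially on $x_n$, viewing $C$ as a polynomial in $x_n$ with coefficients in $\F[x_0,\ldots,x_{n-1}]$, only finitely many specializations of $x_n$ can kill all coefficients simultaneously, so for a generic $a \in \F$ the polynomial $\hat{C}_a$ is nonzero. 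But $\hat{C}_a$ depends on strictly fewer variables than $C$ and is also fooled by $\g_P$, contradicting the minimality of $C$. Hence $\tilde{C} \neq 0$.

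Next, because $C \circ \g_P = 0$, we have $\tilde{C}(g_n) = 0$, so $(x_n - g_n)$ divides $\tilde{C}$ in $\F(\vecz,\vecy)[x_n]$. Let $m \geq 1$ be the exact multiplicity of $g_n$ as a root, i.e.\ $\tilde{C}(x_n) = (x_n - g_n)^m \cdot R(x_n)$ with $R(g_n) \neq 0$. A direct application of the Leibniz rule (valid over any commutative ring containing $\F(\vecz,\vecy)$) gives $\partial_{x_n^j} \tilde{C}(g_n) = 0$ for every $0 \leq j \leq m-1$ and $\partial_{x_n^m} \tilde{C}(g_n) = m! \cdot R(g_n)$, which is nonzero since $\F$ has characteristic zero. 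Setting $i = m-1$ and using $\partial_{x_n^j} \tilde{C}(g_n) = (\partial_{x_n^j} C) \circ \g_P$ yields exactly the two desired conditions.

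The only delicate step is the minimality-based argument for $\tilde{C} \neq 0$; the rest is a routine factorization argument in $\F(\vecz,\vecy)[x_n]$, and characteristic zero is used precisely to ensure that $m!$ is invertible so that the $m$-th derivative does not vanish.
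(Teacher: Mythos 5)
Your proof is correct and follows essentially the same route as the paper's: specialize all coordinates of the generator except $x_n$ to get a univariate $\tilde{C}(x_n)\in\F(\vecz,\vecy)[x_n]$, argue via the minimality of $C$ that $\tilde{C}\neq 0$, and then locate the desired index $i$ as (one less than) the multiplicity of $g_n$ as a root of $\tilde{C}$. The only cosmetic differences are that you handle the $\tilde{C}\equiv 0$ case by appealing to a ``generic'' specialization $a$ rather than a point drawn from the explicit finite grid $\{0,\ldots,D\}$ via \Cref{lem:Comb-Null}, and that you phrase the multiplicity argument through the explicit factorization $\tilde{C}=(x_n-g_n)^m R$ and Leibniz rather than the paper's ``switch from zero to nonzero'' phrasing --- both arguments are equivalent (and note that your use of ``fooled by $\g_P$'' inverts the paper's convention, though the logic is unaffected).
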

\begin{proof}
  Let $\g_P(\vecz,\vecy) = (g_0,\ldots, g_n)$. Consider the polynomial $C(g_0,\ldots, g_{n-1}, x_n)$ (which is the application of the generator $\g_P$ to all coordinates except one) as an element $\hat{C}(x_n) \in \F(\vecy, \vecz)[x_n]$.

  {\bf Case 1: $(\hat{C} = 0)$}
  
  In this case, using \autoref{lem:Comb-Null}, let $a \in \set{0,1,2,\ldots, D}$ such that $C(x_0,\ldots, x_{n-1},a) \neq 0$. Then, $C(x_0,\ldots, x_{n-1},a)$ is also a nonzero polynomial computable by a size $s$ circuit,  depending on fewer than $n+1$ variables, that also vanishes on $\g_P(\vecz,\vecy)$. This contradicts the minimality of $C$.
   \medskip

  {\bf Case 2: $(\hat{C} \neq 0)$}
  
  Let $r = \deg_{x_n}(\hat{C})$. Note that if $(\partial_{x_n^i}(\hat{C}))(g_n) = 0$ for all $0\leq i \leq t$, then $(x_n - g_n)^{t+1}$ divides $\hat{C}$. Since $\deg_{x_n}(\hat{C})$ is $r$, the largest $e$ such that $(x_n - g_n)^e$ divides $\hat{C}$ is at most $r$. Hence, there must be some $t \leq r-1$ such that $\partial_{x_n^{t}}(\hat{C})(g_n) \neq 0$. Since $\hat{C}(g_n) = C \circ \g_P(\vecz, \vecy) = 0$ and $\partial_{x_n^{t}}(\hat{C})(g_n) \neq 0$, there must be an intermediate derivative where  a switch from zero to nonzero occurs. Hence, there must be some $i < r$ such that
\begin{align*}
  \partial_{x_n^i}(C) \circ \g_P(\vecz,\vecy) &= \partial_{x_n^{i}}(\hat{C})(g_n) = 0,\\
  \partial_{x_n^{i+1}}(C) \circ \g_P(\vecz,\vecy) &= \partial_{x_n^{i+1}}(\hat{C})(g_n) \neq 0.\qedhere
\end{align*}
\end{proof}

Let $C' = \partial_{x_n^i}(C)$.  In what follows, we will work with $C'$ instead of $C$. By interpolation, its size $s' \leq s \cdot D$ (where, recall, $D \geq \deg(C)$) and we now have
\begin{align*}
  C' \circ \g_P(\vecz,\vecy) &= 0,\\
  \partial_{x_n}(C') \circ \g_P(\vecz,\vecy) &\neq 0.
\end{align*}

\noindent
Without loss of generality (by translating $\vecz$ by a point in $\set{0,1,\ldots, dD}^k$ if necessary, via \autoref{obs:Del-under-shift}), we may assume that
\[
  \inparen{\partial_{x_n}(C') \circ \g_P(\mathbf{0},\vecy)}\;=:\; \Psi(\vecy) \neq 0.\]
Let $P = P_0 + P_1 + \cdots + P_d$ be the decomposition into homogeneous parts, with $P_i$ being the homogeneous part of degree $i$, and let $P_{\leq r} := \sum_{i \leq r} P_i$.\\

\paragraph*{The reconstruction: } We now proceed to describe the set up of the inductive reconstruction of a \emph{small} circuit for $P$. The induction is on a parameter $j$ which takes values from $0$ up to $d-n$. At the end of the $j^{th}$ step, we would have a circuit which computes all partial derivatives of order at most $n$ of homogeneous components of $P$ of degree at most $j + n$. We now describe the steps of the induction argument more formally. 

{\bf Base case $(j=0)$ :} Each $\partial_{\vecz^{\vece}} P_{\ell}$ for $\abs{\vece}\leq n$ and $\ell \leq n$ can be explicitly written as a sum of at most $N := \binom{n+k}{k}$ monomials. Hence, there is a circuit $B_{0}$ of size $s_0 = N^{2}$ that computes $\setdef{\partial_{\vecz^{\vece}}(P_\ell)}{0\leq \ell \leq n\;,\; \abs{\vece}\leq n}$. 
\medskip

{\bf Induction hypothesis:} There is a circuit $B_{j-1}(\vecz)$ of size at most $s_{j-1}$, with $N(n+j-1)$ outputs that computes $\partial_{\vecz^\vece} P_{\ell}$ for each $\vece$ with $\abs{\vece}\leq n$ and $\ell \leq n+j-1$.

\medskip

{\bf Induction step:} To construct a circuit $B_j(\vecz)$ of size at most $s_j$ (to be defined shortly) that computes $\partial_{\vecz^\vece} P_{\ell}$ for each $\vece$ with $\abs{\vece}\leq n$ and $\ell \leq n+j$.\\

Recall $N = \binom{n+k}{n}$, the number of $k$-variate, degree $n$ monomials. Recall that $\Psi(\vecy) = (\partial_{x_n}(C') \circ \g_P(\mathbf{0},\vecy))$.  We shall say that $\veca \in \F^k$ is ``good'' if $\Psi(\veca) \neq 0$. By \autoref{prop:interpolating}, there exists $\set{\veca_1,\ldots, \veca_N} \subset \set{0,\ldots, \poly(d,D,n)}^k$ is a set of ``good'' points and also an interpolating set for $\mathcal{P}(k,n)$. Let $\Gamma_{j-1,\veca}$ be defined as
\[
  \Gamma_{j-1,\veca} := (\Del_0(P_{\leq n+j-1})(\vecz,\veca),\ldots, \Del_n(P_{\leq n+j-1})(\vecz,\veca)).
\]

\begin{lemma}\label{lem:inductive-reconstruction}
  Let $\veca \in \F^k$ be such that $0 \neq \Psi(\veca) = (\partial_{x_n} C')\circ \g_P(\mathbf{0},\veca)$. Then,
  \[
    \pfrac{-1}{\Psi(\veca)} \cdot C'(\Gamma_{j-1,\veca})  = \Del_n(P_{n+j})(\vecz,\veca) \bmod{\inangle{\vecz}^{j+1}}.
  \]
\end{lemma}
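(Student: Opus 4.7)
The plan is to substitute $\vecy = \veca$ in the identity $C' \circ \g_P(\vecz, \vecy) = 0$ and then Taylor-expand $C'$ around $\Gamma_{j-1, \veca}$, viewing the latter as a known ``approximation'' to $\g_P(\vecz, \veca)$ whose error comes entirely from the unknown tail $P_{> n+j-1}$. Setting $v_i := \Delta_i(P_{> n+j-1})(\vecz, \veca)$ so that $\g_P(\vecz, \veca) = \Gamma_{j-1, \veca} + (v_0, \ldots, v_n)$, the Taylor expansion (which is a finite sum since $C'$ is a polynomial, and makes sense over $\F$ by the characteristic-zero hypothesis) reads
\[
0 \;=\; C'(\g_P(\vecz, \veca)) \;=\; C'(\Gamma_{j-1, \veca}) \;+\; \sum_{i=0}^n \partial_{x_i}(C')(\Gamma_{j-1, \veca}) \cdot v_i \;+\; R,
\]
where $R$ collects all terms containing products of at least two of the $v_i$'s.

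The main step will be a careful bookkeeping of $\vecz$-degrees modulo $\inangle{\vecz}^{j+1}$. Since $P_{> n+j-1} \in \inangle{\vecz}^{n+j}$, \autoref{obs:Del-under-modzi} gives $v_i \in \inangle{\vecz}^{n+j-i}$. Therefore, for $i \leq n-1$ we have $n+j-i \geq j+1$ and so $v_i \equiv 0 \pmod{\inangle{\vecz}^{j+1}}$, while for $i = n$ only the lowest-degree contribution (coming from $P_{n+j}$ itself) survives, giving $v_n \equiv \Delta_n(P_{n+j})(\vecz, \veca) \pmod{\inangle{\vecz}^{j+1}}$. Any summand of $R$ is a product of at least two of the $v_i$'s, and hence lies in $\inangle{\vecz}^{2j}$ in the worst case (achieved only by powers of $v_n$); since $2j \geq j+1$ for $j \geq 1$, the whole remainder satisfies $R \equiv 0 \pmod{\inangle{\vecz}^{j+1}}$.

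Combining these observations, the Taylor expansion reduces modulo $\inangle{\vecz}^{j+1}$ to
\[
0 \;\equiv\; C'(\Gamma_{j-1, \veca}) \;+\; \partial_{x_n}(C')(\Gamma_{j-1, \veca}) \cdot \Delta_n(P_{n+j})(\vecz, \veca).
\]
Because $\Delta_n(P_{n+j})(\vecz, \veca)$ already has $\vecz$-degree at least $j$, the coefficient $\partial_{x_n}(C')(\Gamma_{j-1, \veca})$ only matters modulo $\inangle{\vecz}$, i.e., evaluated at $\vecz = 0$. A direct unpacking of $\Delta_i$ using Taylor's formula shows $\Delta_i(P_{\leq n+j-1})(\mathbf{0}, \veca) = P_i(\veca)$ for all $i \leq n$ (this uses $j \geq 1$, so that $P_n$ itself is part of $P_{\leq n+j-1}$); hence $\Gamma_{j-1, \veca}\big|_{\vecz = 0} = \g_P(\mathbf{0}, \veca)$, and $\partial_{x_n}(C')(\Gamma_{j-1, \veca})\big|_{\vecz = 0} = \Psi(\veca) \neq 0$. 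Dividing by $-\Psi(\veca)$ gives the claimed identity.

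The hard part to verify carefully will be the vanishing of the higher-order Taylor remainder $R$. This is precisely where the non-degeneracy setup is essential: the upper slot $\Delta_n$ of the generator has to lower degrees by a full $n$ so that every $v_i$ with $i < n$ already lies in $\inangle{\vecz}^{j+1}$; had we only differentiated $C$ a smaller number of times, the corresponding $v_i$'s would contribute too-low-degree cross terms and spoil the mod-$\inangle{\vecz}^{j+1}$ equality. Apart from this, the remaining work is routine formal-power-series manipulation and evaluation at $\vecz = 0$.
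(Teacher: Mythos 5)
Your argument is correct and is essentially the paper's own proof: Taylor-expand $C'$ around the known degree-$(\leq n{+}j{-}1)$ approximation $\Gamma_{j-1,\veca}$ of $\g_P(\vecz,\veca)$, then by degree bookkeeping modulo $\inangle{\vecz}^{j+1}$ conclude that the only surviving perturbation term is the one that is linear in the last slot, whose coefficient is exactly $\Psi(\veca)$ after evaluation at $\vecz=\mathbf{0}$. The only cosmetic difference is that you perform a full multivariate Taylor expansion with perturbations $v_0,\ldots,v_n$ in every slot and then observe $v_i\equiv 0$ for $i<n$, whereas the paper first replaces each $\Del_i(P)$ by its mod-$\inangle{\vecz}^{j+1}$ representative $R_i$ (so the $i<n$ slots are already exact) and then runs a univariate Taylor expansion in the single perturbation $A$ of the last slot. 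One small caveat: your closing remark about ``had we only differentiated $C$ a smaller number of times'' is slightly muddled --- the $v_i$'s and their $\vecz$-degrees depend only on $P$ and the structure of $\g_P$, not on how many derivatives of $C$ were taken; the role of the preprocessing (choosing the right $\partial_{x_n^i}C$) is solely to guarantee $C'\circ\g_P=0$ while $\Psi\neq 0$, which is what lets you divide at the end.
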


\noindent
We will defer the proof of this lemma to the end of the section and finish the rest of the proof.\\

We can begin with the circuit $B_{j-1}(\vecz)$ that computes every $\partial_{\vecz^{\vece}}(P_{\ell})$ for $\abs{\vece} \leq n$ and $\ell \leq n+j-1$.
By taking suitable linear combinations of the output gates, we can create a new circuit $B$, of size at most $s_{j-1} + N^{5}$, that computes $\setdef{\Gamma_{j-1,\veca_t}}{t\in [N]}$.
Using \autoref{lem:inductive-reconstruction} for each $\veca_i$, we then obtain a circuit of size $s_{j-1} + N^{5} + s'\cdot N$ that computes $\setdef{\Del_n(P_{n+j})(\vecz,\veca_t)}{t\in [N]}$ modulo the ideal $ {\inangle{\vecz}^{j+1}}$.

By definition, $\Del_n(P_{n+j})(\vecz,\veca)$ is a suitable linear combination of the $n$-th order partial derivatives of  $P_{n+j}(\vecz)$.
 As $\set{\veca_1,\ldots, \veca_N}$ was chosen to be an interpolating set, each $\partial_{\vecz^{\vece}}(P_{n+j})$ with $\abs{\vece} = n$ can be written as a suitable linear combination of $\setdef{\Del_n(P_{n+j})(\vecz,\veca_t)}{t\in [N]}$. 
Furthermore, since $P_{n+j}$ is a homogeneous polynomial, we can also compute all its lower order derivatives via repeated applications of Euler's formula (\autoref{fact:euler}). 
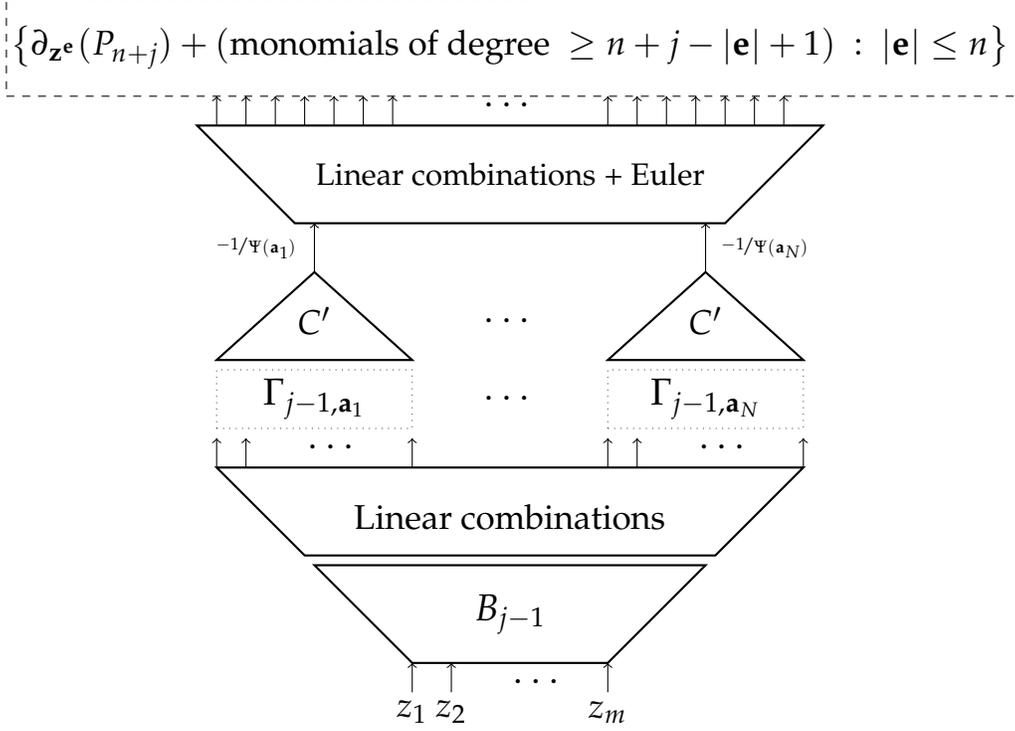
\begin{figure}
  \begin{center}
  \begin{tikzpicture}[transform shape, scale=1.3]
    \draw[<-] (-1,-1 ) -- +(0,-0.3);
    \node at (-1,-1.5) {{\small $z_1$}};
    \draw[<-] (-0.6,-1 ) -- +(0,-0.3);
    \node at (-0.6,-1.5) {{\small $z_2$}};
    \draw[<-] (1,-1 ) -- +(0,-0.3);
    \node at (1,-1.5) {{\small $z_m$}};
    \node at (0.3,-1.2) {$\cdots$};

    \draw[thick] (-1,-1) -- ++(-1,1) -- ++(4,0) -- ++(-1,-1) -- cycle;
    \node at (0,-0.5) {$B_{j-1}$};
    \draw[thick] (-2.1,0.1) -- ++(-0.9,0.9) -- ++(6,0) -- ++(-0.9,-0.9) -- cycle;
    \node at (0, 0.5) {\small Linear combinations};

    \draw[->] (-3,1 ) -- +(0,0.3);
    \draw[->] (-2.7,1 ) -- +(0,0.3);
    \draw[->] (-1,1 ) -- +(0,0.3);
    \node at (-1.8,1.2) {$\cdots$};
    \draw[dotted,black!70] (-3,1.4) rectangle (-1,2);
    \node at (-2,1.7) {$\Gamma_{j-1,\veca_1}$};

    \draw[->] (1,1 ) -- +(0,0.3);
    \draw[->] (1.3,1 ) -- +(0,0.3);
    \draw[->] (3,1 ) -- +(0,0.3);
    \node at (2.2,1.2) {$\cdots$};
    \draw[dotted,black!70] (1,1.4) rectangle (3,2);
    \node at (2,1.7) {$\Gamma_{j-1,\veca_N}$};
    \node at (0,1.7) {$\cdots$};

    \draw[thick] (-3,2.1) -- ++(2,0) -- ++(-1,0.9) -- cycle;
    \node at (-2,2.5) {\small $C'$};
    \draw[->] (-2,3) -- ++(0,0.5);
    \node at (-2.6,3.25) {\tiny $\nicefrac{-1}{\Psi(\veca_1)}$};
    
    \draw[thick] (1,2.1) -- ++(2,0) -- ++(-1,0.9) -- cycle;
    \node at (2,2.5) {\small $C'$};
    \draw[->] (2,3) -- ++(0,0.5);
    \node at (2.6,3.25) {\tiny $\nicefrac{-1}{\Psi(\veca_N)}$};

    \node at (0,2.5) {$\cdots$};
    \draw[thick] (-2.2,3.5) -- ++(-1,1) -- ++(6.4,0) -- ++(-1,-1) -- cycle;
    \node at (0,4) {\footnotesize Linear combinations + Euler};

    \foreach\x in {-3,-2.7,-2.4,...,-1} \draw[->] (\x,4.5) -- ++(0,0.3);
    \foreach\x in {1,1.3,1.6,...,3} \draw[->] (\x,4.5) -- ++(0,0.3);
    \node at (0,4.7) {$\cdots$};
    \draw[dashed] (-5.15,4.8) rectangle (5.2,5.8);
    \node at (0,5.3) {\small $\setdef{\partial_{\vecz^{\vece}}(P_{n+j}) + (\text{monomials of degree } \geq n+j-\abs{\vece} + 1)}{\abs{\vece} \leq n}$};
    
  \end{tikzpicture}
\end{center}
\caption{Pictorial representation of $B_{j}'$}
\label{fig:Blahprime}
\end{figure}
Overall, combined with the outputs of $B_{j-1}(\vecz)$, we have a circuit $B_j'(\vecz)$ (shown in \autoref{fig:Blahprime}) of size $s_{j-1} + N^{10} + s ' N$ that computes
\[
  \setdef{\partial_{\vecz^\vece}(P_\ell)}{\abs{\vece} \leq n\;,\;\ell \leq n+j-1}\;\union\; \setdef{\partial_{\vecz^\vece}(\tilde{P}_{n+j})  }{\abs{\vece} \leq n}, 
  \]
  where $\partial_{\vecz^\vece}(\tilde{P}_{n+j}) \mod \inangle{\vecz}^{n+ j - \abs{\vece}+1} \equiv \partial_{\vecz^\vece}({P}_{n+j})$  for every $\abs{\vece} \leq n$. At this point, the only task left to do is extracting the lowest degree homogeneous components of these outputs. 
  
  From \autoref{fig:Blahprime}, the circuit $B_j'$ is a composition of a circuit of size $N^{10} + s'N$ over the homogeneous circuit $B_{j-1}$ of size $s_{j-1}$. By \autoref{lem:partial-hom}, we extract the lowest degree homogeneous parts of the outputs of $B_j'$ by constructing an equivalent \emph{homogeneous} circuit of size at most $s_{j-1} + ((N^{10} + s'N) \cdot d^2)$ that computes
  \[
    \setdef{\partial_{\vecz^\vece}(P_\ell)}{\abs{\vece} \leq n\;,\;\ell \leq n+j}.
  \]
  This completes the induction step.

  \medskip
  
  Unraveling the induction for $d-n$ steps, we eventually obtain a circuit of size at most $s_{d-n} = O(s' \cdot d^3 \cdot N^{10}) = O(s \cdot D \cdot d^3 \cdot n^{O(k)})$ that computes all the partial derivatives of order at most $n$ of $P_0,\ldots, P_{d}$, which in particular includes $P_0,\ldots, P_d$ that can be summed to produce a circuit for $P$ of size $O(s \cdot D \cdot d^3 \cdot n^{O(k)})$. However, this contradicts the hardness assumption of $P$. Hence, it must be the case that $C \circ \g_P(\vecz,\vecy) \neq 0$. This completes the proof of the main theorem barring the proof of \autoref{lem:inductive-reconstruction}; we address this next.  \qed \raisebox{0.2 em}{\scriptsize (\autoref{thm:main})}

\subsection{Proof of Lemma~\ref{lem:inductive-reconstruction}}

We are given $\Gamma_{j-1,\veca} = (\Del_0(P_{\leq n+j-1})(\vecz,\veca),\ldots, \Del_n(P_{\leq n+j-1})(\vecz,\veca))$. From the assumption on $C'$, we have
\begin{align*}
  0 &=  C'(\Del_0(P)(\vecz,\veca),\ldots, \Del_n(P)(\vecz,\veca))\\
  \implies 0  &= C'(\Del_0(P)(\vecz,\veca),\ldots, \Del_n(P)(\vecz,\veca)) \bmod{\inangle{\vecz}^{j+1}}.
\end{align*}

By \autoref{obs:Del-under-modzi}, we have that $\Del_i(P)(\vecz,\veca) = \Del_i(P_{\leq n+j-1})(\vecz,\veca) \bmod{\inangle{\vecz}^{j+1}}$ for all $i \leq  n-1$, and $\Del_n(P)(\vecz,\veca) = \Del_n(P_{\leq n+j-1})(\vecz,\veca) + \Del_n(P_{n+j})(\vecz,\veca) \bmod{\inangle{\vecz}^{j+1}}$. For the sake of brevity, let $R_i = \Del_i(P_{\leq n + j -1})(\vecz,\veca)$ for $0\leq i\leq n$ and $A = \Del_n(P_{n+j})(\vecz,\veca)$. Therefore,
\begin{align*}
0& = C'(R_0, R_1,\ldots, R_{n-1}, R_n + A)\bmod{\inangle{\vecz}}^{j+1}.
\end{align*}
\noindent We now expand the above expression as a \emph{univariate in $A$} (or in other words, perform a Taylor expansion of the polynomial $C'$ around the point $(R_0, R_1, \ldots, R_n)$). Let $d' =\deg_{x_n}(C')$. Then,
\begin{align*}
  0 &= C'(R_0,\ldots, R_n) + \sum_{i=1}^{d'} A^i \cdot \pfrac{\partial_{x_n^i}(C')(R_0,\ldots, R_n)}{i!} \bmod{\inangle{\vecz}^{j+1}}.
\end{align*}
Moreover, since $A = \Del_n(P_{n+j})(\vecz,\veca)$ is a homogeneous  polynomial of degree $j \geq 1$, we have $A^2 = 0\bmod{\inangle{\vecz}^{j+1}}$. Therefore,
\begin{align*}
  0 &= C'(R_0,\ldots, R_n) + \sum_{i=1}^{d'} A^i \cdot \pfrac{\partial_{x_n^i}(C')(R_0,\ldots, R_n)}{i!} \bmod{\inangle{\vecz}^{j+1}}\\
  & = C'(R_0,\ldots, R_n) + A \cdot \inparen{\partial_{x_n}(C')(R_0,\ldots, R_n)} \bmod{\inangle{\vecz}^{j+1}}\\
    & = C'(R_0,\ldots, R_n) + A \cdot \alpha \bmod{\inangle{\vecz}^{j+1}}
\end{align*}
where $\alpha = \partial_{x_n}(C')(R_0,\ldots, R_n)(\mathbf{0})$, the constant term of $\partial_{x_n}(C')(R_0,\ldots, R_n)(\vecz)$. 
Note
\begin{align*}
\alpha =   \partial_{x_n}(C')(R_0,\ldots, R_n)(\mathbf{0}) &= \partial_{x_n}(C')(\Del_0(P_{\leq n+j-1})(\mathbf{0},\veca),\ldots, \Del_n(P_{\leq n+j-1})(\mathbf{0},\veca))\\
& = \partial_{x_n}(C')(\Del_0(P_{\leq n+j-1})(\vecz,\veca),\ldots, \Del_n(P_{\leq n+j-1})(\vecz,\veca))(\mathbf{0})\\
                                                                    & = \partial_{x_n}(C')(\Del_0(P)(\vecz,\veca),\ldots, \Del_n(P)(\vecz,\veca))(\mathbf{0})\\
                                                           & = \inparen{\partial_{x_n}(C')\circ \g(P,\veca)}(\mathbf{0})\\
                                                           & = \Psi(\veca) \neq 0.
\end{align*}
Combining this with the previous equation, we get
\begin{align*}
  0 & = C'(R_0,\ldots, R_n) + A \cdot \Psi(\veca) \bmod{\inangle{\vecz}^{j+1}}\\
  \implies A = \Del_n(P_{n+j})(\vecz,\veca) & =\pfrac{-1}{\Psi(\veca)} \cdot C'(R_0,\ldots, R_n) \bmod{\inangle{\vecz}^{j+1}}.\hfill
\end{align*}
\qed \raisebox{0.2 em}{\scriptsize (\autoref{lem:inductive-reconstruction})}

\subsection{Derandomization from hard polynomial families}
\label{sec:derandomization-from-hard-polynomials}

In this section we will prove the more general form of our main result, \autoref{thm:derand-from-k-var-lbs-informal}.
It would be instructive to think of the case when $k$ is a constant, as it would retain the main message but simplify the theorem statement (for instance, the hitting set in the conclusion will be of size $s^{O(k^2/\delta^2)}$ instead of the more complicated expression in the statement below).

\begin{theorem}\label{thm:derand-from-k-var-lbs}
  Let $\delta > 0$ be an arbitrary constant, and $k:\N\rightarrow \N$ be any non-decreasing function with $k(d) = o\inparen{\sqrt{\log d}}$; let $\tilde{k}(d) = k\inparen{d^{\log d}}$.
  
Suppose $\set{P_{k(d),d}}_{d\in \N}$  is an explicit family\footnote{We are assuming that the polynomial family contains a degree $d$ polynomial, $P_{k,d}$, for every positive integer $d$. For the purposes of this theorem, it suffices to assume that the family is \emph{sufficiently often} in the following sense: There are absolute constants $a,b$ such that for any $t \in \N$, there is some $P_{k,d}$ in the family such that $t^a \leq d \leq t^b$.} of  polynomials such that $P_{k,d}$ is $k$-variate,  $\deg(P_{k,d}) = d$ and $P_{k,d}$ requires circuits of size at least $d^{\delta}$ (that is, $P_{k,d} \notin \mathcal{C}(k,d,d^\delta)$).
Then, there are explicit hitting sets of size $s^{O(\tilde{k}(s)^2)}$ for the class of $\mathcal{C}(s,s,s)$.
\end{theorem}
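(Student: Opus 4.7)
The plan is to apply \autoref{thm:main} to a well-chosen member $P = P_{k(d), d}$ of the given hard family and then read off a hitting set for $\mathcal{C}(s,s,s)$ by brute-forcing a Schwartz--Zippel grid for the low-variate composition $C \circ \g_P$. Concretely, for a given $s$ I would fix $d = s^{c}$ with an exponent $c$ to be chosen below, set $k = k(d)$ (which satisfies $k \le \tilde k(s)$ once $d \le s^{\log s}$ by monotonicity of $k(\cdot)$), and apply \autoref{thm:main} with parameters $(n+1, D, s_C) = (s, s, s)$. The theorem then requires $P$ to escape circuits of size $\tilde s = s \cdot s \cdot d^3 \cdot (s-1)^{10 k} \le s^{10 k + 2 + 3c}$, whereas the hypothesis delivers hardness $d^\delta = s^{c\delta}$; the binding constraint is thus $c(\delta - 3) \ge 10 k + 2$, which is satisfied (in the regime $\delta > 3$) by taking $c = \Theta(\tilde k(s)/\delta)$.

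Once \autoref{thm:main} applies, for every nonzero $C \in \mathcal{C}(s,s,s)$ the polynomial $C \circ \g_P$ is nonzero in $2k \le 2 \tilde k(s)$ variables and has total degree at most $sd = s^{c+1}$. By \autoref{lem:Schwartz-Zippel}, any grid $S^{2k}$ with $|S| \ge sd+1$ hits $C \circ \g_P$, so its image $\g_P(S^{2k})$ is a hitting set for $\mathcal{C}(s,s,s)$ of size
\[
(sd+1)^{2k} \;\le\; s^{O(\tilde k(s)(c+1))} \;=\; s^{O(\tilde k(s)^2/\delta)} \;=\; s^{O(\tilde k(s)^2)},
\]
where the $1/\delta$ is absorbed into the $O(\cdot)$ notation. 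Explicitness is inherited from $P$: given $P$ as a sum of monomials, each coordinate $\Del_i(P)$ of $\g_P$ is itself a sum of at most $d^{O(k)} = s^{O(\tilde k(s)^2)}$ monomials, so the entire grid $\g_P(S^{2k})$ can be enumerated within the claimed time budget.

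The main obstacle in executing the plan is the regime $\delta \le 3$, where the $d^3$ factor in the hardness requirement of \autoref{thm:main} swamps the hypothesized hardness $d^\delta$ regardless of how $d$ is chosen. This is exactly where the \emph{Kronecker-ing trick} referred to in the paper must be invoked: I would expect it to take the form of a Kronecker-style substitution on $P$ (for instance $P(y_1, y_2^M, \ldots, y_k^{M^{k-1}})$, or an analogous block substitution on the outputs of $\g_P$) that inflates the apparent degree of the polynomial used in the generator while preserving the lower bound inherited from $P$ up to additive overhead, thereby improving the hardness-per-$d^3$ ratio beyond the crude cubic threshold. Once this amplification step is plugged in, the parameter chase in the previous paragraph goes through uniformly for every $\delta > 0$ and $k(d) = o(\sqrt{\log d})$, producing the claimed explicit hitting sets of size $s^{O(\tilde k(s)^2)}$ for $\mathcal{C}(s,s,s)$.
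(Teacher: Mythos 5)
Your overall structure matches the paper's — pick a suitable $d$, apply \autoref{thm:main} to a polynomial built from $P_{k(d),d}$, and read off a hitting set by sweeping a Schwartz--Zippel grid through the generator — and your parameter chase for the regime $\delta > 3$ is essentially correct. You also correctly identify that the case $\delta \le 3$ is obstructed by the $d^3$ factor in the hardness requirement of \autoref{thm:main}, and that the Kronecker-ing trick is meant to resolve it.

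The gap is in the trick itself: your proposed substitution, e.g.\ $P(y_1, y_2^M, \ldots, y_k^{M^{k-1}})$, \emph{increases} the degree while leaving circuit complexity essentially unchanged, so the hardness-per-$d^3$ ratio you want to improve actually gets \emph{worse}, contradicting your own stated goal. The paper's trick is the \emph{inverse} operation: starting from the $k$-variate $P_{k,d}$ of degree $d$, one un-Kroneckers to a $(kt)$-variate polynomial $Q_{k,d,t}$ of degree only $d' = kt\cdot d^{1/t}$, defined so that the substitution $z_{i,j} \mapsto z_i^{d^{(j-1)/t}}$ recovers $P_{k,d}$. Since this substitution costs only $O(kt\log d)$ extra gates via repeated squaring, any small circuit for $Q_{k,d,t}$ yields a small circuit for $P_{k,d}$, so $Q_{k,d,t}$ inherits hardness $\approx d^\delta$ while its own degree has shrunk by a $t$-th root. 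Reinterpreted in terms of its own degree $d'$, this hardness is roughly $(d'/kt)^{t\delta}$, an arbitrarily large power of $d'$ once $t$ is chosen $\gg 1/\delta$ (the paper takes $t = \lceil 8/\delta\rceil$ so that $Q_{k,d,t}$ requires size $d'^{4}$). That comfortably beats the $d'^3 \cdot s^{O(kt)}$ cost in \autoref{thm:main} for every constant $\delta > 0$, and the rest of your Schwartz--Zippel accounting then goes through with $2kt$ seed variables and degree $sd'$, landing at $s^{O(\tilde k(s)^2)}$ as claimed. Reversing the direction of the substitution is the genuine content of the trick and is the missing step in your proposal.
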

\begin{proof}
  We wish to construct a generator to fool the class of $s$-variate, degree $s$ polynomials computable by circuits of size $s$.   
  Let $t = \ceil{\frac{8}{\delta}}$. Choose the smallest $d$ such that $d > s^{(10t \cdot k(d) + 2) \cdot t}$. Note that this is always possible for some $d \leq s^{O(t^2 \log s)}$ since $k(d) = o(\sqrt{\log d})$ (furthermore, if $k = O(1)$, then we can in fact find a $d \leq s^{O(1)}$ satisfying the above constraint). We shall just use $k$ to denote $k(d)$. \\

  To the polynomial $P_{k,d}(z_1,\ldots, z_k)$ in the family, we associate the natural $kt$-variate polynomial $Q_{k,d,t}\inparen{\inparen{z_{i,j}\;:\; i\in [k]\;,\;j\in [t]}}$, of degree at most $d' = (kt) \cdot d^{1/t}$, such that $Q_{k,d,t}(z_{1,1},\ldots, z_{k,t})$ under the substitution
  \[
    z_{i,j} \rightarrow z_i^{d^{(j-1)/t}}
  \]
  yields the polynomial $P_{k,d}(z_1,\ldots, z_k)$. This is achieved by replacing each monomial $z_1^{e_1}\cdots z_k^{e_k}$ in $P_{k,d}$ by $\mathbf{z_{1,\ast}}^{\llparenthesis e_1\rrparenthesis} \cdots \mathbf{z_{k,\ast}}^{\llparenthesis e_k\rrparenthesis}$ where $\llparenthesis e_i \rrparenthesis$ is the tuple corresponding to the integer $e_i$ expressed in base $d^{1/t}$.

  Note that if $Q_{k,d,t}$ has a circuit of size $d'^{4} < d^{\delta/2 + o(1)}$ (recall that $k = o(\sqrt{\log d})$ and $t = O_\delta(1)$), then (by employing repeated squaring to perform the substitution described above) $P_{k,d}$ has a circuit of size at most
  \[
    d^{\delta/2  + o(1)} + O(kt \log d) \ll d^{\delta},
  \]
  which is a contradiction to the hardness of $P_{k,d}$. Hence we have that $Q_{k,d,t}$ is a $k$-variate polynomial of degree $d'$ that requires circuits of size at least $d'^4$. Since, by the choice of $d$, we have $d' > d^{1/t} > s^{10kt+2}$, we have
  \[
    d'^4> s^{10kt} \cdot s^2 \cdot d'^3. 
  \]
  By \autoref{thm:main}, we hence have that $\g_{Q_{k,d,t}}$ is a hitting-set generator for the class $\mathcal{C}(s,s,s)$. 
  Therefore, from \autoref{lem:Schwartz-Zippel}, this yields an explicit hitting set of size at most
  \begin{align*}
    (sd'+1)^{2t \cdot k} & =     (sd'+1)^{2t \cdot k(d)}\\
                         & \leq s^{O(k(d)^2)}\\
                         & \leq s^{O\inparen{k\inparen{s^{O(\log s)}}}^2}&\text{(since $d \leq s^{O(\log s)}$)}\\
                         & \leq s^{O\inparen{\tilde{k}(s)}^2} & \text{(since $k(s) = o(\sqrt{\log s}) \implies k(s^{O(\log s)}) = O(k(s^{\log s}))$)}
  \end{align*}
  as claimed by the theorem. 
\end{proof}



  

\paragraph{Comparison with the derandomization of Kabanets and Impagliazzo~\cite{KI04}:}
   The work of Kabanets and Impagliazzo~\cite{KI04} shows that if there is an explicit family of \emph{multilinear} polynomials $\set{Q_m}$ that requires circuits of size $2^{\Omega(m)}$, then we can construct explicit hitting sets of $s^{O(\log s)}$ size for $\mathcal{C}(s,s,s)$.

   From the above proof, it is easy to see that if $P_{k,d}$ requires circuits of size $d^{\delta}$, then the above proof shows that the polynomial $Q_{k,d,t}$ in the above proof, for $t = \log d$, is an  $m=O(k \log d)$-variate multilinear polynomial that requires circuits of size roughly $d^\delta = \exp(\Omega(m))$ to compute it. Therefore, the hypothesis of \autoref{thm:derand-from-k-var-lbs-informal} implies the hypothesis used in the result of Kabanets and Impagliazzo.

   However, the other direction seems unclear as it is conceivable that the circuit complexity of $P_{k,d}$ is significantly smaller than the complexity of $Q_{k,d,t}$. Hence, \autoref{thm:derand-from-k-var-lbs-informal}, when compared with the hardness-randomness trade-off of Kabanets and Impagliazzo~\cite{KI04}, can be interpreted as a potentially stronger hypothesis leading to the better derandomization of PIT (a complete derandomization of PIT in the case when $k = O(1)$).

\subsection{Application to bootstrapping phenomenon for hitting sets}
We now use~\autoref{thm:derand-from-k-var-lbs} to prove the following theorem about bootstrapping hitting sets for algebraic circuits. The following theorem shows that, for any constants $\delta, k > 0$,  any hitting set for $\mathcal{C}(k,\text{i-deg: }s, s^\delta)$, that saves \emph{even one point} from the trivial hitting set of size $(s+1)^k$ guaranteed by \autoref{lem:Comb-Null}, is sufficient to completely derandomize PIT for $k = O(1)$. 

\bssthm*

\begin{proof}
  Consider an arbitrary, large enough $s$ and let $H_s$ be the hitting set for $\mathcal{C}(k,\text{i-deg: }s,s^\delta)$; the hypothesis guarantees that $\abs{H_s} \leq (s+1)^{k} - 1 < (s+1)^k$. From \autoref{thm:HS}, we can then obtain a polynomial $P_s(z_1,\ldots, z_k)$ of individual degree at most $s$ that cannot be computed by circuits of size $s^\delta$. Expressing this in terms of its total degree $d \leq ks$, we get that $P_s$ is a degree $d$ polynomial that requires circuits of size more than
  \[
    \pfrac{d}{k}^{\delta} \gg d^{\delta/2}.
  \]
Hence, $\set{P_s}$ is an explicit family of $k$-variate, degree-$d$ polynomials that require circuits of size $d^{\delta/2}$. Therefore, \autoref{thm:derand-from-k-var-lbs} yields an explicit $\poly_{\delta,k}(s)$-sized hitting set for the class of $s$-variate, degree $s$ polynomials that can be computed by size $s$ circuits.   
\end{proof}





\medskip

Analogous statements also hold for slow-growing functions $k(d):\N\rightarrow \N$ yielding non-trivial derandomizations. However, the resulting hitting set and the time requried to construct it  would require super-polynomial (but subexponential) time. We state the above simpler settings as we believe they presents the main message without any notational clutter.

\subsection{The strengthened Shub-Smale $\tau$-conjecture implies PIT is in $\P$}
In this section, we prove that the strengthened $\tau$-conjecture (\autoref{conj:tau-conj-mod}) implies an efficiently enumeratable polynomial sized hitting set for algebraic circuits.  
\tautopit*
\begin{proof}
  Let $P_d(x) = (x-1)\cdots (x-d)$. If \autoref{conj:tau-conj-mod} is true with exponent $c$ as stated, then $P_d$ requires algebraic circuits of size $d^{1/c}$ to compute it. Let $f$ be a nonzero polynomial in $\mathcal{C}(s,s,s)$. Following the proof of \autoref{thm:derand-from-k-var-lbs}, for $t = O(c)$ and $d = \poly(s)$ chosen large enough, if $Q_{1,d,t}$ is the \emph{un-Kroneckered} version of $P_{d}$ used in the proof of  \autoref{thm:derand-from-k-var-lbs}, then $\g_{Q_{1,d,t}} \in (\F[y_1,\ldots, y_t, z_1,\ldots, z_t])^s$ is a hitting-set generator for $\mathcal{C}(s,s,s)$.

  Then, since $f \circ \g_{Q_{1,d,t}}$ is a $2t$-variate, nonzero polynomial of degree at most $sd$, by  \autoref{lem:Schwartz-Zippel} we have that the set
  \[
    \mathcal{H} = \setdef{\g_{Q_{1,d,t}}(\veca)}{\veca \in [sd+1]^{2t}}
  \]
  is a hitting set for $\mathcal{C}(s,s,s)$. As the coefficients of $P_d$ are bounded by $d^d \cdot d! = 2^{O(d\log d)}$, we have the same bound for the coefficients of $Q_{1,d,t}$. Using the definition of the generator (\autoref{def:prg}), all the coordinates of points in $\mathcal{H}$ are rational numbers with numerator and denominators being integers of $\poly(d,s) = \poly(s)$ bits and can certainly be computed by a Turing machine in $\poly(s)$ time. Thus, the above set $\mathcal{H}$ is a hitting set for $\mathcal{C}(s,s,s)$ that can be enumerated by a deterministic Turing machine in $\poly(s)$ time. 
\end{proof}

\section{Open Problems}
We end with some open problems. 
\begin{itemize}
\item The construction in this paper takes a $d^{\Omega(k)}$-hard, $k$-variate polynomial of degree $d$ to construct a generator for $\mathcal{C}(s,s,s)$ with $s \lessapprox d$. A $k$-variate degree $d$ polynomial has $d^{\Theta(k)}$ coefficients and if $P$ is $d^{\Omega(k)}$-hard then, intuitively, we have ``enough entropy'' to be able to fool $\mathcal{C}(s,s,s)$ with $s \approx d^{O(k)}$.

  On the other hand, the HSG of Kabanets and Impagliazzo~\cite{KI04}, in certain ranges of parameters, is ``lossless'' in the above sense it uses a $k$-variate, multilinear, hard polynomial which has $\tilde{s} = 2^k$ coefficients to build a generator for $\mathcal{C}(s,s,s)$ with $s = \tilde{s}^{\Omega(1)}$.

  In principle, it seems conceivable that there is a construction that, starting from a $d^{\Omega(k)}$-hard $k$-variate degree $d$ polynomial, gives an HSG for $\mathcal{C}(s,s,s)$ for $s = d^{\Omega(k)}$. Constructing such an explicit HSG is perhaps the most natural open question here. 
\item The construction of the HSG in this paper needs the characteristic of the field to be large enough or zero. Constructing a HSG with similar properties (seed length, stretch, running time, degree) over fields of small positive characteristic would be quite interesting. 
\item In the current statement of~\autoref{thm:main}, the hardness required for $P$ for the HSG to fool circuits of size $s$, depends also on the degree of this circuit.  We suspect that this dependence on the degree can be avoided, if the hardness assumption is in terms of \emph{border-complexity}; in that case, this HSG should fool all circuits of small size regardless of their degree. 
\item Lastly, it would be interesting to understand if this new HSG and the ideas in its analysis have any other applications in complexity theory. 
\end{itemize}

\section*{Acknowledgements}
We thank Marco Carmosino, Chi-Ning Chou, Nutan Limaye, Rahul Santhanam, Srikanth Srinivasan and Anamay Tengse for insightful conversations at various stages of this work. We are particularly grateful to Srikanth Srinivasan who pointed out an observation that strengthened our prior version of \autoref{thm:derand-from-k-var-lbs}, and eliminated the use of border complexity entirely.
We thank Ryan Williams and Chris Umans for pointing out the connection to the Tau conjecture. 
We also thank Madhu Sudan for many insightful discussions and much encouragement;  in particular for patiently sitting through a presentation of  a preliminary version of the proof of~\autoref{lem:inductive-reconstruction}. 

Mrinal is also thankful to Swastik Kopparty for many helpful discussions on  Nisan-Wigderson generator and list decoding algorithms for multiplicity codes while he was a Ph.D. student at Rutgers. Swastik's  interest in  a purely algebraic solution for algebraic hardness randomness tradeoffs had a non-trivial role in motivating this work.

\bibliographystyle{customurlbst/alphaurlpp}
\bibliography{references}

\end{document}
